\newcommand{\norm}[1]{\left\| #1 \right\|}
\newcommand{\ainnerproduct}[2]{\langle #1, #2 \rangle}
\newtheorem{definition}{Definition}
\newtheorem{theorem}{Theorem}
\newtheorem{proposition}{Proposition}
\newtheorem{lemma}{Lemma}
\newcommand{\bef}{\begin{figure}}
\newcommand{\eef}{\end{figure}}
\newcommand{\bseq}{\begin{subequations}}
\newcommand{\eseq}{\end{subequations}}
\patchcmd{\numparts}{\addtocounter{equation}{1}}{\refstepcounter{equation}}{}{}
\begin{document}

\title[Quantifying macrostructures in viscoelastic sub-diffusive flows]{Quantifying macrostructures in viscoelastic sub-diffusive flows}

\author{T Chauhan$^{1*}$, K Kalyanaraman$^1$ and S Sircar$^{1}$}

\address{$^1$ Department of Mathematics, IIIT Delhi, India 110020}
\address{$^*$ Corresponding Author}
\eads{\mailto{tanishai@iiitd.ac.in, \mailto{kaushik@iiitd.ac.in}, \mailto{sarthok@iiitd.ac.in}}}
\vspace{10pt}

\begin{abstract}
We present a theory to quantify the formation of spatiotemporal macrostructures (or the non-homogeneous regions of high viscosity at moderate to high fluid inertia) for viscoelastic sub-diffusive flows, by introducing a mathematically consistent decomposition of the polymer conformation tensor, into the so-called structure tensor. Our approach bypasses an inherent problem in the standard arithmetic decomposition, namely, the fluctuating conformation tensor fields may not be positive definite and hence, do not retain their physical meaning. Using well-established results in matrix analysis, the space of positive definite matrices is transformed into a Riemannian manifold by defining and constructing a geodesic via the inner product on its tangent space. This geodesic is utilized to define three scalar invariants of the structure tensor, which do not suffer from the caveats of the regular invariants (such as trace and determinant) of the polymer conformation tensor. First, we consider the problem of formulating perturbative expansions of the structure tensor using the geodesic, which is consistent with the Riemannian manifold geometry. A constraint on the maximum time, during which the evolution of the perturbative solution can be well approximated by linear theory along the Euclidean manifold, is found. Finally, direct numerical simulations of the viscoelastic sub-diffusive channel flows (where the stress-constitutive law is obtained via coarse-graining the polymer relaxation spectrum at finer scale, Chauhan et. al., Phys. Fluids, DOI: 10.1063/5.0174598 (2023)), underscore the advantage of using these invariants in effectively quantifying the macrostructures.
\end{abstract}

%
\vspace{2pc}
\noindent{\it Keywords}: anomalous diffusion, Caputo derivative, Riemannian manifold, structure-preserving groups, rheology

%
%
%

\section{Introduction}\label{sec:Intro}
The subject of anomalous diffusion has received tremendous attention over the last half-century, ranging from physics~\cite{Goychuk2021}, biology~\cite{Lai2009} to quantitative finance~\cite{Coffey2004}. Some of the most enigmatic and profoundly significant experimental results are better rationalized within the viscoelastic sub-diffusive approach in random environments such as the cytosol and the plasma membrane of biological cells~\cite{Rubenstein2003}, crowded complex fluids and polymer solutions~\cite{Levine2001}, dense colloidal suspensions~\cite{Kremer1990}, single-file diffusion in colloidal systems~\cite{Kou2004} as well as in atherosclerotic blood vessels~\cite{Fogelson2015}. A notably abnormal feature in viscoelastic sub-diffusive flows is the presence of temporally stable, non-homogeneous regions of high viscosity at moderate to high fluid inertia (or the so-called spatiotemporal macrostructures). For example, Riley~\cite{Riley1988} reported an elasticity induced flow stabilization of viscoelastic fluids coated over {\color{black} compliant} surfaces at a fairly high Reynolds number ($Re \sim 4000$). In a separate study involving ethanol gel fuels, elastic stabilization at a high shear rate was attributed due to an abnormally high second normal stress difference~\cite{Nandagopalan2018}. Viscoelastic flow stabilization at higher values of $Re$, in tapered microchannels, was explained due to the presence of wall effects~\cite{Zarabadi2019}. In another in vitro study, a biofilm deacidification created a non-homogeneous environment for molecular diffusion, leading to a `subdiffusive effect' with hindered flow rates~\cite{Zarabadi2018}. In summary, while the in silico studies of the classical channel flows indicate the appearance of temporal instability for Reynolds number as low as $Re \sim 50$~\cite{Khalid2021}, temporal stability and `structure formation' for viscoelastic sub-diffusive flows is only recognized in experimental realizations, until now. One objective of this work is to highlight the potential of fractional calculus to effectively capture the formation of macrostructures in viscoelastic sub-diffusive flows.

Previous approaches to analyze the polymer dynamics in dilute solutions have been to utilize the statistics of polymer forces~\cite{Dubief2005} and torques~\cite{Kim2013}. However, a more appropriate quantity to probe the polymer deformation history, is the conformation tensor, $\mathcal{C}$, a second order positive definite tensor which is obtained by averaging, over all molecular realizations, the dyad formed by the polymer end-to-end vector~\cite{Bird1987}. The trace of $\mathcal{C}$ (denoted from here onwards as $\tr \mathcal{C}$), is commonly used in literature to analyze $\mathcal{C}$ since (i) it is equal to the sum of its principal stretches and is, therefore, a measure of the polymer deformation~\cite{Sureshkumar1997}, (ii) it is proportional to the elastic energy in purely Hookean constitutive models of the polymers~\cite{Beris1994}. However, $\tr \mathcal{C}$ is not a sufficiently complete descriptor of polymer deformation. For example, Berris~\cite{Beris1994} found that the mean of $\tr \mathcal{C}$ can increase with increasing elasticity without a commensurate effect on the mean velocity profile. This behavior arises because the mean stress deficit is not a function of any of the normal components of $\mathcal{C}$. This example highlights the importance of simultaneously considering all of the components of $\mathcal{C}$ in order to arrive at a complete picture of the polymer deformation and its effect on the velocity field. The fluctuating conformation tensor, $\mathcal{C'}$ (obtained by subtracting the mean conformation tensor, $\mathcal{\overline{C}}$ from instantaneous tensor $\mathcal{C}$) and its moments, provide one method to obtain relevant higher-order statistical descriptions of $\mathcal{C}$. However, this fluctuating tensor is not guaranteed to be physically realizable since (i) whenever $\tr \mathcal{C'} \le 0$, this implies negative material deformation and this tensor loses positive-definiteness, and (ii) equally probable states of contraction ($\tr \mathcal{C} \in (0, 1)$) and expansion ($\tr \mathcal{C} \in (1, \infty)$) would be described by fluctuations with very different magnitudes. A more appealing way to evaluate fluctuations in $\mathcal{C}$ is to use $\log \mathcal{C}$ because the logarithm of a positive definite matrix is a symmetric matrix and the set of symmetric matrices form a vector space~\cite{Bhatia2015}. While $\log \mathcal{C}$ has been an object of interest in some studies of viscoelastic flows~\cite{Fattal2004}, two additional difficulties arise in using $\log \mathcal{C}$. First, the mean value of $\mathcal{C}$, or $\overline{\mathcal{C}}$, is not equal to $\rme^{\log \overline{\mathcal{C}}}$ implying that the effect of the polymer stress on the mean momentum balance requires all statistical moments of $\log \mathcal{C}$, even when the polymer stress is a linear function of $\mathcal{C}$. A second difficulty is that, in general, $\rme^{\log \overline{\mathcal{C}} + (\log \mathcal{C'})} \ne \rme^{\log \overline{\mathcal{C}}} \cdot \rme^{(\log \mathcal{C'})}$, which implies that there is no way to associate $\log \mathcal{C'}$ with a physical polymer deformation. Thus, this article is dedicated to the development of an alternate tensor from the polymer conformation tensor $\mathcal{C}$ as well as a formal way to visualize this new tensor, for sub-diffusive flows.

Although the mathematical results outlined in this work are well-established results in advanced matrix analysis textbooks~\cite{Lang2001, Bhatia2015}, to the author's best knowledge, they have not been used to evaluate the hydrodynamics of sub-diffusive flows. In this work, we aim to (i) derive an appropriate tensor (or the so called `structure tensor') which describes the polymer deformation in a physically realizable manner, (ii) derive appropriate scalar measures associated with the structure tensor, and (iii) corroborate our theory developed in aim-(i) and (ii) through regular perturbation analysis and fully nonlinear simulations. The paper is organized as follows. Our mathematical model along with the assumptions are delineated in~\sref{sec:model}. Equations describing the dynamics of the structure tensor is presented in~\sref{subsec:structure}. \Sref{subsec:invariants} outlines the main result, namely, the description of three scalar invariants of the structure tensor via the development of a geodesic on the Riemannian manifold. The weakly nonlinear perturbation analysis and the direct numerical simulations (DNS) are outlined in~\sref{sec:perturb} and~\sref{sec:results}, respectively. The conclusions follow in \sref{sec:conclusions}. Finally, a detailed derivation of the perturbed solution comprising the initial conditions for the numerical simulations is listed in~\ref{appA}.

\section{Mathematical Model}\label{sec:model}
In this section, we outline the model governing the incompressible, sub-diffusive dynamics of a planar (2D) viscoelastic channel flow for polymer melts. {\color{black} In an earlier study~\cite{Chauhan2022}, the authors derived the model by coarse-graining the polymer relaxation spectrum at finer scale, which resulted in a (time) fractional order, non-linear stress constitutive equations in the continuum limit.} Using the following scales for non-dimensionalizing the governing equations: the height of the channel $H$ for length, the timescale $T$ corresponding to maximum base flow velocity, $\mathcal{U}_0$ (i.~e., $T = (H / \mathcal{U}_0)^{1/\alpha}$) for time and $\rho \mathcal{U}^2_0$ for stresses (where $\rho, \mathcal{U}_0$ are the density and the velocity scale, respectively), we summarize the model in streamfunction-vorticity formulation as follows,
\begin{numparts} 
\label{eqn:FullSystem}
\begin{eqnarray}
&Re \left[ \frac{\partial^\alpha \Omega}{\partial t^\alpha} + {\mathbf{v}} \cdot \nabla \Omega \right] = \nu \nabla^2 \Omega + \frac{(1-\nu)}{We} \nabla \times \nabla \cdot \mathcal{C}, \label{eqn:Momentum} \\
&\nabla^2 \psi = -\Omega, \label{eqn:Poisson} \\
&\frac{\partial^\alpha \mathcal{C}}{\partial t^\alpha} + {\mathbf{v}}\cdot \nabla \mathcal{C} - (\nabla {\mathbf{v}})^T\mathcal{C} -\mathcal{C} \nabla {\mathbf{v}} = \frac{{\mathbf{I}} - \mathcal{C}}{We}, \label{eqn:ExtraStress}
\end{eqnarray}
\end{numparts}
where $\partial^\alpha f({\mathbf x}, t)/\partial t^\alpha$ denotes the Caputo fractional derivative of order $\alpha$~\cite{Podlubny1999} with respect to $t$ defined by
\begin{eqnarray}
\frac{\partial^\alpha f({\mathbf x}, t)}{\partial t^\alpha} \!= \!\frac{1}{\Gamma(1\!-\!\alpha)} \int^t_0 \!\! \frac{d t'}{(t-t')^\alpha} \frac{\partial f({\mathbf x}, {\color{black}t'})}{\partial t'}, \, 0\! <\! \alpha \!<\! 1,
\label{eqn:method2}
\end{eqnarray}
and the operators $\nabla(\cdot)$ and $\nabla^2(\cdot)$ in equation~\eref{eqn:Momentum}, are (integer order) gradient and Laplacian operators in $\mathbb{R}^2$. The variables $t, {\color{black} \psi}, {\mathbf v} = (u, v) = \left(\partial {\color{black} \psi}/\partial y, - \partial {\color{black} \psi}/\partial x\right),$ $\Omega = \nabla \times {\mathbf v}$, denote time, streamfunction, velocity and vorticity, respectively. The parameters $\eta_{\rm{s}}, \eta_{\rm{p}}, \eta_0 = \eta_{\rm{s}} + \eta_{\rm{p}}$ and $\nu = \eta_{\rm{s}} / \eta_0$ are the solvent viscosity, the polymeric contribution to the shear viscosity, the total viscosity and the viscous contribution to the total viscosity of the fluid, respectively. The dimensionless groups characterizing inertia and elasticity are Reynolds number, $Re = \rho \mathcal{U}_0 H/\eta_0$, and Weissenberg number, $We = \lambda^\alpha \mathcal{U}_0/H$, respectively. The parameter, {\color{black} $\lambda^\alpha$}, is the polymer relaxation time. Note that {\color{black}stress constitutive} equation~\eref{eqn:ExtraStress} represents the fractional version of the regular Oldroyd-B model for viscoelastic fluids~\cite{Sircar2019}.

From the perspective of continuum mechanics, $\mathcal{C}$, is the Finger tensor associated with polymer deformation~\cite{Beris1994}, such that
\begin{eqnarray}
\mathcal{C} = {\mathbf{F}} {\mathbf{F}}^T,
\label{eqn:CF}
\end{eqnarray}
where ${\mathbf{F}} = {\mathbf{F}}(t)$ is the instantaneous deformation gradient tensor. If the spatial coordinates in the micro-structure are given by ${\mathbf{r}} = {\mathbf{r}}({\mathbf{r}}_0, t)$ where ${\mathbf{r}}_0$ are the coordinates at equilibrium, then ${\mathbf{F}} = \partial {\mathbf{r}} / \partial {\mathbf{r}}_0$. In other words, a vector $d {\mathbf{r}}_0$ deforms to $d {\mathbf{r}} = {\mathbf{F}} d {\mathbf{r}}_0$ under the deformation, {\color{black}$\mathbf{F}$}.

\subsection{Dynamics of structure tensor}\label{subsec:structure}
The caveats in the conformation tensor outlined in~\sref{sec:Intro} (namely, the loss of positive definiteness in arithmetic compilation of fluctuations and unequal measure for equally probable states representing contraction and expansion) enforces us to adopt a different framework to capture polymer deformation in sub-diffusive flows. We begin by denoting the general linear group of degree $n$, which is the set of all $n \times n$ invertible matrices, as ${\mathbf{GL}}_n$. 
\begin{definition}
Define the structure-preserving group action of ${\mathbf{GL}}_n$ on a set ${\mathbf{V}}_n \subseteq \mathbb{R}^{n \times n}$ as,
\begin{displaymath}
[{\mathbf{B}}]_{\mathbf{A}} \equiv {\mathbf{A}} {\mathbf{B}} {\mathbf{A}}^T,
\end{displaymath} \label{def:SPGroup}
where ${\mathbf{A}} \in {\mathbf{GL}}_n$ and ${\mathbf{B}} \in {\mathbf{V}}_n$. 
\label{def:SPGA}
\end{definition}
Using definition~\ref{def:SPGroup}, we find that equation~\eref{eqn:CF} reduces to
\begin{eqnarray}
\mathcal{C} = [{\mathbf{I}}]_{\mathbf{F}}.
\label{eqn:S1}
\end{eqnarray}
Let $\overline{\mathcal{C}}$ be the mean conformation tensor (or the conformation tensor associated with the flow at equilibrium, refer~\sref{sec:results} for an example), then we assume that $\mathcal{C}$ is similar to $\overline{\mathcal{C}}$ under the group action (definition~\ref{def:SPGroup}) for any rotation matrix ${\mathbf{S}} \in {\mathbf{SO}}_{\color{black} n} \subseteq {\mathbf{GL}}_n$, where ${\mathbf{SO}}_n$ represents the $n \times n$ special orthogonal group of rotation matrices.

Similarly, define $\overline{\mathbf{F}} \in {\mathbf{GL}}_{\color{black} n}$ as the deformation gradient tensor associated with the mean configuration such that,
\begin{eqnarray}
\overline{\mathcal{C}} = \overline{{\mathbf{F}}} \overline{{\mathbf{F}}}^T.
\label{eqn:S2}
\end{eqnarray}
We remark that ${\overline{\mathbf{F}}}$ is non-unique since it can be represented as
\begin{eqnarray}
\overline{{\mathbf{F}}} = \overline{\mathcal{C}}^{1/2} {\mathbf{S}},
\label{eqn:S3}
\end{eqnarray}
for any ${\mathbf{S}} \in {\mathbf{SO}}_3$. $\overline{\mathcal{C}}^{1/2}$ is the unique matrix square-root, found exclusively in terms of $\overline{\mathcal{C}}$ and its invariants (i.~e., its trace and determinants) using an application of the representation theorem~\cite{Bhatia2015}. Since all we require is that the $\det \overline{{\mathbf{F}}} > 0$ {\color{black} (in order to maintain the positive definiteness of $\overline{\mathcal{C}}$)}, we choose ${\mathbf{S}} = {\mathbf{I}}$ in equation~\eref{eqn:S3}. 

Given $\overline{{\mathbf{F}}}$ satisfying equation~\eref{eqn:S3}, we can decompose the instantaneous deformation gradient tensor ${\mathbf{F}}$ and $\overline{\mathbf{F}}$ by considering successive transformations on the vector $d {\mathbf{r}}_0$ as,
\begin{eqnarray}
d {\mathbf{r}} = {\mathbf{F}} d {\mathbf{r}}_0 = \overline{{\mathbf{F}}} \mathcal{L} d {\mathbf{r}}_0,
\label{eqn:S4}
\end{eqnarray}
where $\mathcal{L} = \overline{{\mathbf{F}}}^{-1} {\mathbf{F}}$ is the tensor describing fluctuations away from the mean configuration, {\color{black} denoted as fluctuating deformation gradient tensor}. Alternatively, substituting ${\mathbf{F}} = \overline{{\mathbf{F}}} \mathcal{L}$ in equation~\eref{eqn:CF} and utilizing definition~\ref{def:SPGroup}, we arrive at the following definition, 
\begin{definition}[Structure tensor]
Define $\mathcal{G}$ such that
\begin{displaymath}
\mathcal{C} = \overline{{\mathbf{F}}} \mathcal{G} \overline{{\mathbf{F}}}^T = [\mathcal{G}]_{\overline{{\mathbf{F}}}},
\end{displaymath}
where $\mathcal{G} = \mathcal{L}\mathcal{L}^T$. 
\label{def:ST}
\end{definition}

The {\color{black} fluctuating conformation tensor}, $\mathcal{C'} (= \mathcal{C}-\overline{\mathcal{C}})$ is related to the structure tensor as follows,
\begin{eqnarray}
\mathcal{C'} = [\mathcal{G} - {\mathbf{I}}]_{\overline{{\mathbf{F}}}}.
\label{eqn:S6}
\end{eqnarray}

Using definition~\ref{def:ST} in equation~\eref{eqn:ExtraStress} and pre multiplying (post multiplying) the resultant equation by $\overline{{\mathbf{F}}}^{-1}$ ($\overline{{\mathbf{F}}}^{-T}$) and noting that $\mathcal{G}$ is a symmetric tensor, we arrive at the following equation governing the dynamics of structure tensor, 
\begin{eqnarray}
\frac{\partial^\alpha \mathcal{G}}{\partial t^\alpha} + {\mathbf{v}} \cdot \nabla \mathcal{G} = \mathcal{G} \mathfrak{F}({\mathbf{v}}) + {\mathfrak{F}({\mathbf{v}})}^T \mathcal{G} - \mathcal{M}
\label{eqn:ST}
\end{eqnarray}
which replaces equation~\eref{eqn:ExtraStress} in the viscoelastic sub-diffusive model (equation~\eref{eqn:FullSystem}). The functions, \\
$\mathfrak{F}({\mathbf{v}}) = {\overline {\mathbf{F}}}^T \nabla {\mathbf{v}} \overline {\mathbf{F}}^{-T} - {\left({\overline {\mathbf{F}}}^{-1} \left({\mathbf{v}} \cdot \nabla \right) \overline {\mathbf{F}} \right)}^T$, and $\mathcal{M} = \frac{1}{We} \left(\mathcal{G} - {\left({\overline {\mathbf{F}}}^{T} {\overline {\mathbf{F}}}\right)}^{-1}\right)$.

\subsection{Main results: scalar invariants via a non-euclidean geodesic}\label{subsec:invariants}
The tensorial nature of $\mathcal{G}$ renders the quantification of the fluctuating {\color{black} conformation tensor}, a difficult task. By utilizing $\tr \mathcal{G}$, Berris~\cite{Beris1994} made an initial attempt to characterize polymer deformation in the Oldroyd-B model, by defining a `elastic potential energy'. Elastic energy was an insufficient descriptor in characterizing polymer deformation due to (a) its dependence on the choice of the particular constitutive model, and (b) elastic energy was found to be the same for a family of conformation tensors with identical trace but variable determinant. We instead evolve an approach to characterize deformation using the inherent structure of the tensor $\mathcal{G}$.

Any scalar characterization of $\mathcal{G}$ can be naively developed as a function of its three principle invariants, i.~e., trace, dyadic product of eigenvalues and determinant. However, even for simple cases (such as the isotropic case) the invariants are bounded between $0$ and $1$ ($1$ and $\infty$) for compression with respect to $\overline{\mathcal{C}}$ (expansion with respect to $\overline{\mathcal{C}}$). This asymmetric characterization is undesirable. Further, the statistical moments of the invariants vary over several orders of magnitude, rendering these moments as uninformative predictors of polymer stretching. The above-mentioned problems arise because the set of $n \times n$ positive definite matrices (denoted with $\mathbf{PS}_n$ in subsequent discussion) do not form a vector space and thus the euclidean notion of translation and distances are irrelevant. Instead, we exploit the Riemannian structure of $\mathbf{PS}_n$ to formulate alternative scalar measures of $\mathcal{G}$.

$\mathbf{PS}_n$ is a Hilbert space where we can define an inner product given by $\ainnerproduct{\mathbf{A}}{\mathbf{B}}_{\mathbf{X}}  =  \tr \left( \mathbf{X}^{-1} \cdot \mathbf{A}^T \cdot \mathbf{X}^{-1} \cdot \mathbf{B} \right)$, and a corresponding induced norm $\|{\mathbf{A}}\|_{\mathbf{X}} = \left( \tr \left( \mathbf{X}^{-1} \mathbf{A}^T \mathbf{X}^{-1} \mathbf{A} \right) \right)^{1/2}$, where $\mathbf{X} \in \mathbf{PS}_n$. Furthermore, since $\mathbf{PS}_n$ is an open subset of the space of $n \times n$ real-valued matrices, it is a differentiable manifold. {\color{black} Using a simple argument, it can be shown that} the tangent space at every point {\color{black}in $\mathbf{PS}_n$} is the space of symmetric matrices. However, $\mathbf{PS}_n$ can be {\color{black} shown} to be a Riemannian manifold {\color{black}with a} geodesic {\color{black}which is obtained} via the {\color{black}same} inner product {\color{black}(defined above)} on the tangent space at every point. The next set of results form the requisite machinery to formulate this geodesic which will be needed to define the scalar invariants of $\mathcal{G}$. 

Consider a parametrized curve on $\mathbf{PS}_n$ connecting points $\mathbf{X}, \mathbf{Y}  \in \mathbf{PS}_n$. That is $P: [0,1] \to \mathbf{PS}_n$ with $P(0) = \mathbf{X}$ and $P(1) = \mathbf{Y}$. The distance, in the sense of the Riemannian metric, traversed on the manifold along the curve $P = P(r)$ is given by,
\begin{displaymath}
  \ell_P(r) = \int\limits_0^r \norm{\frac{dP(r')}{dr'}}_{P(r')} dr'. 
\end{displaymath}
$\ell_P$ is {\color{black}invariant under affine transformation, as shown in the next lemma.}

\begin{lemma}[Affine invariance~\cite{Bhatia2015}]
 For every positive definite matrix $\mathbf{A}$ and differentiable path $P$ on the Riemannian manifold of positive definite matrices, we have:
  \[
    \ell_{P} = \ell_{[P]_{\mathbf{A}}},
  \]
  where $[\cdot]_{\mathbf{A}}$ denotes an action under $\mathbf{A}$ of the form $\mathbf{A}^T P \mathbf{A}$ {\color{black}(see definition~\ref{def:SPGA} for details)}.
  \label{lem:AI}
\end{lemma}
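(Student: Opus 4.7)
The plan is to reduce the integral identity to a pointwise identity on the integrand. Set $Q(r) := [P(r)]_{\mathbf{A}}$; since $\mathbf{A}$ is independent of $r$, differentiating the congruence action yields $Q'(r) = \mathbf{A}\, P'(r)\, \mathbf{A}^T$ (using the convention of Definition~\ref{def:SPGA}; note that $\mathbf{A}$ is symmetric because it is positive definite, so this coincides with the mirror-image form $\mathbf{A}^T P \mathbf{A}$ used in the statement of the lemma). It then suffices to prove
\begin{equation*}
\bigl\|Q'(r)\bigr\|_{Q(r)} \;=\; \bigl\|P'(r)\bigr\|_{P(r)} \qquad \text{for every } r \in [0,1],
\end{equation*}
since integrating this equality in $r'$ over $[0,r]$ immediately delivers $\ell_{[P]_{\mathbf{A}}}(r) = \ell_P(r)$.

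To establish the pointwise identity I would compute $Q(r)^{-1} = \mathbf{A}^{-T} P(r)^{-1} \mathbf{A}^{-1}$ and substitute both this and the expression for $Q'(r)$ into the defining formula $\|M\|_X^2 = \tr\bigl(X^{-1} M^T X^{-1} M\bigr)$. The resulting product of six matrices contains, reading inward, the adjacent pair $\mathbf{A}^{-1} \mathbf{A}$ and, after bringing the trailing $\mathbf{A}^T$ around by the cyclicity of the trace, the adjacent pair $\mathbf{A}^T \mathbf{A}^{-T}$; both collapse to the identity. What survives is exactly $\tr\bigl(P^{-1} (P')^T P^{-1} P'\bigr) = \|P'\|_P^2$, which is the claim. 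No symmetry assumption on $P'(r)$ is used in this cancellation, though of course $P'(r)$ is symmetric as a tangent vector at the symmetric matrix $P(r)$.

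I do not anticipate a genuine obstacle here: the entire content of the statement is that the Riemannian inner product $\ainnerproduct{\cdot}{\cdot}_{\mathbf{X}}$ has been designed precisely so that a congruence transformation of the base point $\mathbf{X}$ is absorbed by its flanking $\mathbf{X}^{-1}$ factors, with the cyclicity of the trace then converting this design into a pointwise equality of norms. The only small piece of bookkeeping is reconciling the two conventions for the group action (Definition~\ref{def:SPGA} writes $\mathbf{A}\mathbf{B}\mathbf{A}^T$ while the lemma statement writes $\mathbf{A}^T P \mathbf{A}$); these coincide for positive definite $\mathbf{A}$, and in any case the argument above uses only the congruence structure of the action, not the orientation of the transpose, so it goes through in either convention.
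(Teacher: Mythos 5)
Your proposal is correct and follows essentially the same route as the paper's proof: differentiate the congruence action $Q(r)=[P(r)]_{\mathbf{A}}$, substitute $Q^{-1}$ and $Q'$ into the defining formula for the Riemannian norm, cancel the $\mathbf{A}$ factors using adjacent inverses and one application of the cyclicity of the trace, and then integrate the resulting pointwise equality of speeds. You also correctly flag and reconcile the mismatch between the convention $[\mathbf{B}]_{\mathbf{A}}=\mathbf{A}\mathbf{B}\mathbf{A}^T$ in Definition~\ref{def:SPGA} and the form $\mathbf{A}^T P\mathbf{A}$ used in the lemma statement (the paper uses the latter in its displayed computation without comment), observing that the argument depends only on the congruence structure and so is insensitive to which orientation is chosen.
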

\begin{proof}
  We use the definition of the norm $\|{\cdot}\|_{\mathbf{X}}$ as stated above and the commutativity of the trace of matrix product, to arrive at,  

\begin{eqnarray}
\fl{\color{black}\norm{d P_{[\mathbf{A}]}}_{P_{[\mathbf{A}]}}}&=&\norm{\left( \mathbf{A}^T P(r) \mathbf{A} \right)^{{\color{black} -1/2}} \left( \mathbf{A}^T P(r) \mathbf{A} \right)^{\color{black} '} \left( \mathbf{A}^T P(r) \mathbf{A} \right)^{-1/2}}_{\color{black} \mathbf{I}} \nonumber\\
\fl  &=& {\color{black} \left( \tr \left( \left( \mathbf{A}^T P(r) \mathbf{A} \right)^{-1} \left( \mathbf{A}^T P(r) \mathbf{A} \right)' \left( \mathbf{A}^T P(r) \mathbf{A} \right)^{-1} \left( \mathbf{A}^T P(r) \mathbf{A} \right)' \right) \right)^{1/2}} \nonumber \\
\fl  &=& \left( \tr \mathbf{A}^{-1} \left( P \right)^{-1}(r) P'(r) \left( P \right)^{-1}(r) {\color{black} P'(r)} \mathbf{A} \right)^{1/2} \nonumber \\
\fl  &=& \left( \tr \left( P \right)^{-1}(r) P'(r) \left( P \right)^{-1}(r) P'(r) \right)^{1/2} \nonumber \\
\fl  &=& \norm{\left( P \right)^{-1/2}(r) P'(r) \left( P \right)^{-1/2}(r)}_{\color{black} \mathbf{I}} {\color{black}= \norm{d P}_P}, \nonumber
\end{eqnarray}

{\color{black} where $^\prime$ denotes derivative with respect to the independent variable.} The proof is completed by integrating both sides of this equality over $r$ to obtain that,
\[
  \ell_P = \int\limits_0^r \norm{\frac{dP(r')}{dr'}}_{P(r')} dr' = {\color{black} \int\limits_0^r \norm{\frac{dP_{[\mathbf{A}]}(r')}{dr'}}_{P_{[\mathbf{A}]}(r')} dr'} = \ell_{[P]_{\mathbf{A}}}.
\]
\end{proof}
In the derivation above, we have used the cyclical property of the trace and the fact that an infinitesimal distance away from the point $\mathbf{X}$ on the manifold is given by $\norm{d \mathbf{X}}_{\mathbf{X}} = \norm{\mathbf{X}^{-1/2} d \mathbf{X} \mathbf{X}^{-1/2}}_{\mathbf{I}} = \left(\tr \left( \mathbf{X}^{-1} d \mathbf{X}\right)^2\right)^{1/2}$. Using lemma~\ref{lem:AI}, we can define $d(\mathbf{X}, \mathbf{Y})$, (or the geodesic distance between $\mathbf{X}$ and $\mathbf{Y}$) as the infimum of $\ell_P(1)$ over all possible curves $P$ connecting $\mathbf{X}$ and $\mathbf{Y}$,
\begin{definition}
\begin{eqnarray}
  d(\mathbf{X}, \mathbf{Y}) = \inf_{P} \{\ell_{P}(1) \, | \, P(r) \in \mathbf{PS}_n, P(0) = \mathbf{X}, P(1) = \mathbf{Y}. \} \nonumber
\end{eqnarray}
  \label{def:geodesic1}
\end{definition}
{\color{black}A corollary of lemma~\ref{lem:AI} is that $d(\mathbf{X}, \mathbf{Y}) = d([\mathbf{X}]_{\mathbf{A}}, [\mathbf{Y}]_{\mathbf{A}})$.} Note that the Hopf-Rinow theorem guarantees the existence and uniqueness of such a geodesic. The next set of three results allow the construction of this geodesic.

\begin{theorem}[Exponential metric increasing property~\cite{Lang2001}]
  For any two real symmetric matrices $\mathbf{X}$ and $\mathbf{Y}$, we have that:
\begin{eqnarray}
    \norm{\mathbf{X} - \mathbf{Y}}_{\mathbf{I}} \le d(\rme^{\mathbf{X}}, \rme^{\mathbf{Y}}),
    \label{eqn:T1}
 \end{eqnarray}
  where $\rme^{\mathbf{X}}, \rme^{\mathbf{Y}}$ are positive definite matrices.
  \label{theorem:thm1}
\end{theorem}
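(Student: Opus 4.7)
The strategy is to establish that the matrix exponential, viewed as a map from the Euclidean space of $n \times n$ symmetric matrices (equipped with the Frobenius norm $\|\cdot\|_{\mathbf{I}}$) into the Riemannian manifold $\mathbf{PS}_n$, is locally metric-expanding at every point. Granted this local claim, the theorem follows by bounding the Riemannian length of an arbitrary path in $\mathbf{PS}_n$ joining $\rme^{\mathbf{X}}$ to $\rme^{\mathbf{Y}}$ from below by the Euclidean length of its logarithmic pullback, then applying the flat triangle inequality, and finally taking an infimum via definition~\ref{def:geodesic1}.

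First I would pick any smooth path $P:[0,1] \to \mathbf{PS}_n$ with $P(0) = \rme^{\mathbf{X}}$ and $P(1) = \rme^{\mathbf{Y}}$. Because the matrix exponential is a diffeomorphism from symmetric matrices onto $\mathbf{PS}_n$, I would set $Q(t) = \log P(t)$, obtaining a smooth symmetric-matrix-valued path joining $\mathbf{X}$ to $\mathbf{Y}$. The crux of the proof is the pointwise estimate $\|P'(t)\|_{P(t)} \ge \|Q'(t)\|_{\mathbf{I}}$. To prove it I would invoke the Daleckii--Krein integral representation of the derivative of the matrix exponential,
\[
P'(t) = \int_0^1 \rme^{s Q(t)}\, Q'(t)\, \rme^{(1-s)Q(t)}\, ds,
\]
and then pass to an orthonormal eigenbasis in which $Q(t) = \mathrm{diag}(\lambda_1, \ldots, \lambda_n)$. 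A direct component-wise computation in this basis gives $(P'(t))_{ij} = Q'(t)_{ij}(\rme^{\lambda_i} - \rme^{\lambda_j})/(\lambda_i - \lambda_j)$, collapsing to $\rme^{\lambda_i} Q'(t)_{ii}$ on the diagonal. Substituting into the weighted norm and applying the algebraic identity $(\rme^{\lambda_i} - \rme^{\lambda_j})^2 \rme^{-\lambda_i - \lambda_j} = 4\sinh^2((\lambda_i - \lambda_j)/2)$ collapses the resulting double sum to
\[
\|P'(t)\|_{P(t)}^2 = \sum_{i,j} Q'(t)_{ij}^2 \left(\frac{\sinh((\lambda_i - \lambda_j)/2)}{(\lambda_i - \lambda_j)/2}\right)^{2} \ge \sum_{i,j} Q'(t)_{ij}^2 = \|Q'(t)\|_{\mathbf{I}}^2,
\]
where the inequality rests on the elementary scalar fact $\sinh(x)/x \ge 1$ for every real $x$ (extended by continuity at $x = 0$).

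Integrating the pointwise bound over $t$ and then using the flat triangle inequality on the vector space of symmetric matrices,
\[
\ell_P(1) = \int_0^1 \|P'(t)\|_{P(t)}\, dt \ge \int_0^1 \|Q'(t)\|_{\mathbf{I}}\, dt \ge \left\|\int_0^1 Q'(t)\, dt \right\|_{\mathbf{I}} = \|\mathbf{Y} - \mathbf{X}\|_{\mathbf{I}}.
\]
Taking the infimum over all admissible curves $P$ through definition~\ref{def:geodesic1} then delivers $d(\rme^{\mathbf{X}}, \rme^{\mathbf{Y}}) \ge \|\mathbf{X} - \mathbf{Y}\|_{\mathbf{I}}$, which is precisely inequality~\eref{eqn:T1}.

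The principal obstacle is the pointwise metric-expansion estimate. The Daleckii--Krein formula is only transparent after diagonalizing $Q(t)$, and the off-diagonal entries of $Q'(t)$ must be handled via the divided difference $(\rme^{\lambda_i} - \rme^{\lambda_j})/(\lambda_i - \lambda_j)$ rather than the naive derivative $\rme^{\lambda_i}$ that would appear only if $Q(t)$ and $Q'(t)$ commuted. The collapse of the resulting sum into a squared hyperbolic sine-cardinal factor is precisely what makes the bound sharp (equality occurring exactly when $\mathbf{X}$ and $\mathbf{Y}$ commute) and reduces the entire statement to the scalar inequality $\sinh(x) \ge x$ for $x \ge 0$.
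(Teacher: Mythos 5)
Your proposal is correct and follows essentially the same route as the paper: reduce the claim to the pointwise estimate that the differential of the matrix exponential is metric-expanding, pass to an eigenbasis of $Q(t)$ so the differential acts entrywise by the divided difference $(\rme^{\lambda_i}-\rme^{\lambda_j})/(\lambda_i-\lambda_j)$, collapse the weighted norm into a squared hyperbolic sinc factor, invoke $\sinh x / x \ge 1$, and then integrate and take the infimum over paths. The only superficial difference is that you derive the divided-difference formula from the Daleckii--Krein integral representation, whereas the paper quotes it directly as a spectral formula from Bhatia and isolates the pointwise bound as a standalone inequality (its inequality~\eref{eqn:IMEI}) before proving it.
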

\begin{proof}
In order to demonstrate the proof, we wish to show the following inequality,
  \begin{eqnarray} \label{eqn:IMEI}
    \norm{\mathbf{X} - \mathbf{Y}}_{\mathbf{I}} \le \norm{\rme^{-\mathbf{A}/2} \left( D \rme^{\mathbf{A}} (\mathbf{X} - \mathbf{Y}) \right) \rme^{-\mathbf{A}/2}}_{\mathbf{I}}
  \end{eqnarray}
  where $\mathbf{A}$ is any real symmetric matrix, $\rme^{-\mathbf{A}}$ is the exponential map evaluated at the point $\mathbf{A}$ in the Riemannian manifold of symmetric matrices, and $D \rme^{\mathbf{A}} (\mathbf{X} - \mathbf{Y})$ is the derivative of the exponential map at the point $\mathbf{A}$ evaluated on the matrix $\mathbf{X} - \mathbf{Y}$ and defined as follows,
 \begin{definition}
\begin{eqnarray}
     D \rme^{\mathbf{H}} (\mathbf{K}) := \lim_{t \to 0} \frac{\rme^{\mathbf{H} + t \mathbf{K}} - \rme^{\mathbf{H}}}{t}. \nonumber
\end{eqnarray}
 \label{def:Dexp}
\end{definition}
for any matrices $\mathbf{H}$ and $\mathbf{K}$. {\color{black} The inequality~\eref{eqn:T1} follows from the inequality~\eref{eqn:IMEI} as follows. Let $\mathbf{H}(t)$ be any path joining symmetric matrices $\mathbf{X}$ and $\mathbf{Y}$, then $\rme^{\mathbf{H}}$ is the path joining $\rme^{\mathbf{X}}$ and $\rme^{\mathbf{Y}}$. Let $\chi = \rme^{\mathbf{H}}$ then $\chi' = D \rme^{\mathbf{H}} ({\mathbf{H}}'(t))$. The length of this path is given by,}
\begin{eqnarray}
{\color{black}\ell_\chi = \int_0^1 \norm{d \chi}_\chi} &=& \int_0^1 \norm{\chi^{-1/2} d\chi \chi^{-1/2}}_{\mathbf{I}} dr \nonumber\\
&=& \int_0^1 \norm{\rme^{-\mathbf{H}/2} D \rme^{\mathbf{H}} \mathbf{H}'(r) \rme^{-\mathbf{H}/2}}_{\mathbf{I}} dr \nonumber\\
&\ge& \int_0^1 \norm{\mathbf{H}'(r)}_{\mathbf{I}} dr \nonumber\\
&\ge& \norm{\mathbf{X} - \mathbf{Y}}_{\mathbf{I}},
\label{Bhatia_Cor6.1.3}
\end{eqnarray}
%
where the last inequality appears because $\norm{\mathbf{X} - \mathbf{Y}}_{\mathbf{I}}$ is the length of this path in the Euclidean space of symmetric matrices. But $d(\rme^{\mathbf{X}}, \rme^{\mathbf{Y}}) = \inf \ell_\chi \ge \norm{\mathbf{X} - \mathbf{Y}}_{\mathbf{I}}.$ (from Definition~\ref{def:geodesic1}).
%
 
Thus, in order to prove inequality~\eref{eqn:IMEI} we shall equivalently show for a real symmetric matrix, $\mathbf{B} = (\mathbf{X} - \mathbf{Y})$ that,
\begin{eqnarray} \label{eqn:IMEI1}
    \norm{\mathbf{B}}_{\mathbf{I}} \le \norm{\rme^{-\mathbf{A}/2} \left( D \rme^{\mathbf{A}} (\mathbf{B}) \right) \rme^{-\mathbf{A}/2}}_{\mathbf{I}}.
\end{eqnarray}
{\color{black} Choosing an orthonormal basis in which $\mathbf{A} =$diag$(\lambda_1, \ldots, \lambda_n)$ and deploying the spectral decomposition formula (abridged from equation~(2.40) in~\cite{Bhatia2015}), we have that,}
\begin{eqnarray}
{\color{black}D \rme^{\mathbf{A}}(\mathbf{B}) = \frac{\rme^{\lambda_i}-\rme^{\lambda_j}}{\lambda_i - \lambda_j} b_{ij},} \nonumber
\end{eqnarray}
{\color{black}where $b_{ij}$ is the $i, j$ entry of the matrix, $\mathbf{B}$. Similarly, the $i, j$ entry of the matrix, $\rme^{-\mathbf{A}/2} \left( D \rme^{\mathbf{A}} (\mathbf{B}) \right) \rme^{-\mathbf{A}/2}$, is,}
\begin{eqnarray}
{\color{black}\frac{\sinh (\lambda_i - \lambda_j) / 2}{(\lambda_i - \lambda_j) / 2} b_{ij}.} \nonumber
\end{eqnarray}
Since $(\sinh x) / x \ge 1$ for all real $x$, the inequality~\eref{eqn:IMEI1} follows.
\end{proof}

We note that the equality in equation~\eref{eqn:T1} is achieved when $\mathbf{X}$ and $\mathbf{Y}$ commute, and in this {\color{black} special} case, we can parametrize the geodesic, as outlined in the next result.
\begin{proposition}\label{proposition:Prop1}
  Let $\mathbf{X} = \rme^{\mathcal{X}}$ and $\mathbf{Y} = \rme^{\mathcal{Y}}$ be positive definite matrices such that $\mathbf{X} \mathbf{Y} = \mathbf{Y} \mathbf{X}$. Then, the exponential function maps the line segment~\cite{Bhatia2015},
\[
    (1 - r) \mathcal{X} + r \mathcal{Y}, \quad 0 \le r \le 1,
\]
  in the Euclidean space of symmetric matrices to the geodesic between $\mathbf{X}$ and $\mathbf{Y}$ on the Riemannian manifold of positive definite matrices and
\begin{eqnarray}
    d(\mathbf{X}, \mathbf{Y}) = \norm{\mathcal{X} - \mathcal{Y}}_{\mathbf{I}}.
    \label{eqn:T7}
\end{eqnarray}
\end{proposition}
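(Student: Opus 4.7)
The plan is to construct an explicit candidate curve between $\mathbf{X}$ and $\mathbf{Y}$, compute its length along the Riemannian manifold, and then sandwich the geodesic distance using Theorem~\ref{theorem:thm1}, establishing both claims simultaneously. Define
\[
  P(r) = \exp\{(1-r)\mathcal{X} + r\mathcal{Y}\}, \qquad r \in [0,1],
\]
and first verify that $P(0) = \mathbf{X}$, $P(1) = \mathbf{Y}$, and $P(r) \in \mathbf{PS}_n$ (since the exponential of a real symmetric matrix is positive definite).

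The crucial input is the commutativity hypothesis. Because $\mathbf{X}$ and $\mathbf{Y}$ are commuting positive definite matrices, they are simultaneously orthogonally diagonalizable, and hence their unique symmetric logarithms $\mathcal{X}$ and $\mathcal{Y}$ commute as well. Consequently the exponent $H(r) = (1-r)\mathcal{X} + r\mathcal{Y}$ commutes with its derivative $H'(r) = \mathcal{Y} - \mathcal{X}$, so the matrix-exponential chain rule collapses to $P'(r) = (\mathcal{Y} - \mathcal{X})\,P(r) = P(r)(\mathcal{Y} - \mathcal{X})$. Substituting this into the Riemannian norm introduced in~\sref{subsec:invariants}, and using symmetry of $\mathcal{Y}-\mathcal{X}$ together with its commutativity with $P(r)^{\pm 1}$, a short manipulation collapses $P^{-1}P'P^{-1}P'$ to $(\mathcal{Y}-\mathcal{X})^2$, so that
\[
  \norm{P'(r)}_{P(r)} = \bigl(\tr\bigl[(\mathcal{Y}-\mathcal{X})^2\bigr]\bigr)^{1/2} = \norm{\mathcal{Y}-\mathcal{X}}_{\mathbf{I}},
\]
which is constant in $r$. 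Integrating immediately yields $\ell_P(1) = \norm{\mathcal{X}-\mathcal{Y}}_{\mathbf{I}}$.

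To close the argument, Theorem~\ref{theorem:thm1} supplies the lower bound $d(\mathbf{X},\mathbf{Y}) \ge \norm{\mathcal{X}-\mathcal{Y}}_{\mathbf{I}}$, while Definition~\ref{def:geodesic1} and the explicit curve $P$ supply the matching upper bound $d(\mathbf{X},\mathbf{Y}) \le \ell_P(1) = \norm{\mathcal{X}-\mathcal{Y}}_{\mathbf{I}}$. Equality follows, proving~\eref{eqn:T7}; since $P$ attains the infimum, its image is (by the Hopf--Rinow uniqueness noted after Definition~\ref{def:geodesic1}) precisely the geodesic joining $\mathbf{X}$ and $\mathbf{Y}$. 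The only subtle point is bookkeeping: commutativity must be invoked in two distinct places — first to reduce $(e^{H(r)})'$ to the product $(\mathcal{Y}-\mathcal{X})P(r)$ (the general Fréchet derivative of the matrix exponential is not a simple product), and again to simplify the induced norm down to a trace in the tangent space at $\mathbf{I}$. Without this hypothesis, Theorem~\ref{theorem:thm1} would hold only as a strict inequality and this straight-line-in-log parametrization would fail to realize the geodesic.
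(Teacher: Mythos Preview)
Your proof is correct and follows essentially the same route as the paper: define the exponential of the straight segment, use commutativity to compute $P'(r)=(\mathcal{Y}-\mathcal{X})P(r)$, evaluate the Riemannian length as $\norm{\mathcal{X}-\mathcal{Y}}_{\mathbf{I}}$, and sandwich against Theorem~\ref{theorem:thm1}. The only substantive difference is the uniqueness step: the paper argues directly that any other length-minimizing path $\tilde\gamma$ would give, via $\log\tilde\gamma$ and inequality~\eref{Bhatia_Cor6.1.3}, a Euclidean path of length $\norm{\mathcal{X}-\mathcal{Y}}_{\mathbf{I}}$ between $\mathcal{X}$ and $\mathcal{Y}$, forcing it to be the straight segment; you instead invoke the Hopf--Rinow uniqueness remark made after Definition~\ref{def:geodesic1}, which is shorter but relies on that prior claim.
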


\begin{proof}
  It is enough to show that the path given by,
\begin{eqnarray}
    {\color{black}\gamma(r)} = \rme^{\left( (1 - r) \mathcal{X} + r \mathcal{Y} \right)}, \quad 0 \le r \le 1,
   \label{eqn:T8}  
\end{eqnarray}
is the unique path of shortest length joining $\mathbf{X}$ and $\mathbf{Y}$ in the space of symmetric matrices. Adopting the following parametrization of the path, ${\color{black}\gamma(r)} = \mathbf{X}^{1 - r} \mathbf{Y}^r$ as well as the commutativity of $\mathbf{X}$ and $\mathbf{Y}$, we have,
 \begin{eqnarray}
    {\color{black}\gamma'(r) =  \left( \mathcal{Y} - \mathcal{X} \right) \gamma(r).}
\end{eqnarray}
{\color{black} The length of this path is given by (see equation~\eref{Bhatia_Cor6.1.3} above),}
\begin{eqnarray}
{\color{black} \ell_\gamma = \int_0^1 \norm{\gamma^{-1/2} d \gamma \gamma^{-1/2}}_{\mathbf{I}} dr = \norm{\mathcal{X} - \mathcal{Y}}_{\mathbf{I}}.}
\label{Bhatia_Prop6.1.5}
\end{eqnarray}
{\color{black} Theorem~\ref{theorem:thm1} says that $\gamma$ is the shortest path. All that remains to show is that the path $\gamma$ under consideration is unique. Suppose $\tilde{\gamma}$ is another path that joins $\mathbf{X}$ and $\mathbf{Y}$ and has the same length as that of $\gamma$. Then $\log \tilde{\gamma}$ is a path joining $\mathcal{X}$ and $\mathcal{Y}$, and by equation~\eref{Bhatia_Cor6.1.3} this path has minimum length $\norm{\mathcal{X} - \mathcal{Y}}_{\mathbf{I}}$. However, in a Euclidean space, the straight line segment is the unique shortest path between two points. Hence, the result follows.}
%
\end{proof}

Finally, using the affine-invariance property (lemma~\ref{lem:AI}) of the Riemannian metric and noting that $\mathbf{I}$ commutes with every element of $\mathbf{PS}_n$, we arrive at the following {\color{black} general} result.
\begin{theorem}
  Let $\mathbf{X}$ and $\mathbf{Y}$ be positive definite matrices. There exists a unique geodesic $\mathbf{X} \#_r \mathbf{Y}$ on the Riemannian manifold of positive definite matrices that joins $\mathbf{X}$ and $\mathbf{Y}$ with the following parametrization~\cite{Bhatia2015},
\begin{eqnarray}
    \mathbf{X} \#_r \mathbf{Y} = \mathbf{X}^{1/2} \left( \mathbf{X}^{-1/2} \mathbf{Y} \mathbf{X}^{-1/2} \right)^r \mathbf{X}^{1/2},
\label{eqn:T10}
\end{eqnarray}
which is natural in the sense that,
\begin{eqnarray}
    d(\mathbf{X}, \mathbf{X} \#_r \mathbf{Y}) = r \, d(\mathbf{X}, \mathbf{Y}),
\label{eqn:T11}
\end{eqnarray}
  for each $r \in {\color{black}[0,\,1]}$. Furthermore, we have,
  \begin{eqnarray}
  \fl   d(\mathbf{X}, \mathbf{Y}) = \norm{ \log \left( \mathbf{X}^{-1/2} \mathbf{Y} X^{1/2} \right)}_{\mathbf{I}} = \left[ \sum\limits_{i = 1}^3 \left( \log \sigma_i \left( \mathbf{X}^{-1} \mathbf{Y} \right) \right)^2 \right]^{1/2},
\label{eqn:T12}
  \end{eqnarray}
where $\sigma_i$ are the eigenvalues of the matrix $\mathbf{X}^{-1} \mathbf{Y}$.
\label{theorem:thm2}
\end{theorem}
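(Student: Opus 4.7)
My plan is to leverage Lemma~\ref{lem:AI} (affine invariance of path length) to reduce the general positive definite pair $(\mathbf{X}, \mathbf{Y})$ to the commutative case already resolved by Proposition~\ref{proposition:Prop1}. The key observation is that $\mathbf{I}$ commutes with every element of $\mathbf{PS}_n$, so if one endpoint can be transported to $\mathbf{I}$ by an affine action, the proposition applies and I can transport the resulting geodesic and its length back to the original endpoints.

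First I would apply the action $[\cdot]_{\mathbf{A}}$ with $\mathbf{A} = \mathbf{X}^{-1/2}$ (the inverse of the unique symmetric positive square root of $\mathbf{X}$). This maps the endpoints to $[\mathbf{X}]_{\mathbf{X}^{-1/2}} = \mathbf{I}$ and $[\mathbf{Y}]_{\mathbf{X}^{-1/2}} = \mathbf{X}^{-1/2}\mathbf{Y}\mathbf{X}^{-1/2}$. Writing $\mathbf{X}^{-1/2}\mathbf{Y}\mathbf{X}^{-1/2} = \rme^{\mathcal{Z}}$ where $\mathcal{Z}$ is the real symmetric principal logarithm, and taking $\mathcal{X}=0$, $\mathcal{Y}=\mathcal{Z}$ in Proposition~\ref{proposition:Prop1}, the unique shortest path between $\mathbf{I}$ and $\rme^{\mathcal{Z}}$ is $r \mapsto \rme^{r\mathcal{Z}} = (\mathbf{X}^{-1/2}\mathbf{Y}\mathbf{X}^{-1/2})^r$, with length $\norm{\mathcal{Z}}_{\mathbf{I}}$.

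Next I would pull this curve back through the inverse action $[\cdot]_{\mathbf{X}^{1/2}}$. Lemma~\ref{lem:AI} guarantees that this map preserves path lengths, so the image $r \mapsto \mathbf{X}^{1/2}(\mathbf{X}^{-1/2}\mathbf{Y}\mathbf{X}^{-1/2})^r\mathbf{X}^{1/2}$ is the unique shortest path between $\mathbf{X}$ and $\mathbf{Y}$, yielding simultaneously the formula for $\mathbf{X}\#_r\mathbf{Y}$ and the identity $d(\mathbf{X},\mathbf{Y}) = \norm{\log(\mathbf{X}^{-1/2}\mathbf{Y}\mathbf{X}^{-1/2})}_{\mathbf{I}}$. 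Uniqueness descends from Proposition~\ref{proposition:Prop1}: any alternative shortest curve between $\mathbf{X}$ and $\mathbf{Y}$ would, under the same isometry, pull back to a second minimizer between $\mathbf{I}$ and $\rme^{\mathcal{Z}}$, which is excluded. For the spectral representation I would use that $\mathbf{X}^{-1/2}\mathbf{Y}\mathbf{X}^{-1/2}$ is similar (via conjugation by $\mathbf{X}^{1/2}$) to $\mathbf{X}^{-1}\mathbf{Y}$, hence shares the eigenvalues $\{\sigma_i\}$, combined with the identity $\norm{\log \mathbf{M}}_{\mathbf{I}}^2 = \tr (\log\mathbf{M})^2 = \sum_i (\log \sigma_i)^2$, which is valid because $\log \mathbf{M}$ is symmetric.

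The natural-parametrization identity $d(\mathbf{X}, \mathbf{X}\#_r\mathbf{Y}) = r\,d(\mathbf{X},\mathbf{Y})$ would then follow from one further application of affine invariance: the transformed pair $([\mathbf{X}]_{\mathbf{X}^{-1/2}}, [\mathbf{X}\#_r\mathbf{Y}]_{\mathbf{X}^{-1/2}}) = (\mathbf{I}, (\mathbf{X}^{-1/2}\mathbf{Y}\mathbf{X}^{-1/2})^r)$ again lies in the commutative regime, where Proposition~\ref{proposition:Prop1} immediately gives $\norm{r\mathcal{Z}}_{\mathbf{I}} = r\norm{\mathcal{Z}}_{\mathbf{I}}$. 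The main obstacle I anticipate is not algebraic but conceptual: one must verify cleanly that $[\cdot]_{\mathbf{A}}$ is a length-preserving \emph{bijection} on curves in $\mathbf{PS}_n$, so that infima of lengths are transported to infima of lengths and both existence and uniqueness of the minimizer pass across the action. Once that is in hand, the remaining steps collapse to spectral-theoretic bookkeeping.
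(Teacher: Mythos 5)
Your proposal follows essentially the same route as the paper: transport the pair $(\mathbf{X},\mathbf{Y})$ to $(\mathbf{I},\mathbf{X}^{-1/2}\mathbf{Y}\mathbf{X}^{-1/2})$ via the congruence isometry, invoke Proposition~\ref{proposition:Prop1} in the commutative case, pull the geodesic back with Lemma~\ref{lem:AI}, and finish with the spectral identity. The only additions beyond the paper's proof are cosmetic — you spell out the uniqueness transfer and flag that the congruence action is an invertible length-preserving map — but the decomposition and key lemmas are identical.
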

\begin{proof}
 Clearly, the matrices $\mathbf{I}$ and $\mathbf{X}^{-1/2} \mathbf{Y} \mathbf{X}^{-1/2}$ commute. Hence, the geodesic joining these {\color{black}two} points is naturally parameterized as:
  \begin{eqnarray}
    P_0(r) = \left( \mathbf{X}^{-1/2} \mathbf{Y} \mathbf{X}^{-1/2} \right)^r.
    \label{eqn:T13}
  \end{eqnarray}
  Applying the isometry $\mathbf{X}^{1/2} \mathbf{Y} \mathbf{X}^{1/2}$, we obtain the path,
  \begin{eqnarray}
    P(r) = \mathbf{X}^{1/2} \left( P_0(r) \right) \mathbf{X}^{1/2} = \mathbf{X}^{1/2} \left( \mathbf{X}^{-1/2} \mathbf{Y} \mathbf{X}^{-1/2} \right)^r \mathbf{X}^{1/2},
    \label{eqn:T14}
  \end{eqnarray}
  joining the points $\mathbf{X}^{1/2} \mathbf{I} \mathbf{X}^{1/2} = \mathbf{X}$ and $\mathbf{X}^{1/2} \mathbf{X}^{-1/2} \mathbf{Y} \mathbf{X}^{-1/2} \mathbf{X}^{1/2} = \mathbf{Y}$. Because $\mathbf{X}^{1/2} \mathbf{Y} \mathbf{X}^{1/2}$ is an isometry, the path~\eref{eqn:T14} is a geodesic joining $\mathbf{X}$ and $\mathbf{Y}$. Thus, equality~\eref{eqn:T10} follows. Next, in order to prove the equality~\eref{eqn:T11} we have that,
  \begin{eqnarray}
    \fl d\left(\mathbf{X}, \mathbf{X} \#_r \mathbf{Y}\right)
    &=& d \left(\mathbf{X}, \mathbf{X}^{1/2} \left( \mathbf{X}^{-1/2} \mathbf{Y} \mathbf{X}^{-1/2} \right)^r \mathbf{X}^{1/2}\right) \nonumber \\ 
    \fl &=& {\color {black}d \left(\mathbf{X}^{-1/2} \mathbf{X} \mathbf{X}^{-1/2}, \mathbf{X}^{-1/2} \mathbf{X}^{1/2} \left( \mathbf{X}^{-1/2} \mathbf{Y} \mathbf{X}^{-1/2}\right)^r \mathbf{X}^{1/2} \mathbf{X}^{-1/2} \right)} \nonumber \\
    \fl &=& {\color {black} d \left(\mathbf{I}, \left( \mathbf{X}^{-1/2} \mathbf{Y} \mathbf{X}^{-1/2}\right)^r \right) }\nonumber \\
   \fl  &=& {\color {black}\inf \int\limits_0^1 \norm{\log(\mathbf{I}) - \log \left( \mathbf{X}^{-1/2} \mathbf{Y} \mathbf{X}^{-1/2}\right)^r } dr} \nonumber \\
   \fl  &=& {\color {black} r \inf \int\limits_0^1 \norm{ \log \left( \mathbf{X}^{-1/2} \mathbf{Y} \mathbf{X}^{-1/2}\right) }} = r \, d \left(\mathbf{X}, \mathbf{Y}\right),
       \label{eqn:T15}
  \end{eqnarray}
  where the last equality arises as a consequence of proposition~\ref{proposition:Prop1}, since,
  \begin{eqnarray}
    d \left({\color{black}\mathbf{X}, \mathbf{Y}}\right)
    &=& d\left( \mathbf{I}, \mathbf{X}^{-1/2} \mathbf{Y} \mathbf{X}^{-1/2} \right) \nonumber \\
    &=& \norm{\log \mathbf{I} - \log \left( \mathbf{X}^{-1/2} \mathbf{Y} \mathbf{X}^{-1/2} \right)}_{\mathbf{I}} \nonumber \\
    &=& \norm{\log \left( \mathbf{X}^{-1/2} \mathbf{Y} \mathbf{X}^{-1/2} \right)}_{\mathbf{I}}.
       \label{eqn:T16}
  \end{eqnarray}

  Finally, from the definition of the Riemannian norm, and basic linear algebra, we get that,
  \begin{eqnarray}
    d(\mathbf{X}, \mathbf{Y}) = \left[ \sum\limits_{i = 1}^3 \left( \log \sigma_i \left( \mathbf{X}^{-1} \mathbf{Y} \right) \right)^2 \right]^{1/2}, 
        \label{eqn:T17}
  \end{eqnarray}
  where $\sigma_i$ are the eigenvalues of the matrix $\mathbf{X}^{-1} \mathbf{Y}$.
\end{proof}
We are now ready to introduce the scalar measures which can be used to quantify the {\color{black} structure} tensor, $\mathcal{G}$. First, let us denote the matrix logarithm of $\mathcal{G}$ as $\mathcal{L}_{\mathcal{G}}$ (i.~e., $\mathcal{G} = \rme^{\mathcal{L}_{\mathcal{G}}}$). This matrix logarithm exists, is unique (since $\mathcal{G}$ is positive definite) and has eigenvalues which are the logarithm of the eigenvalues of $\mathcal{G}$.

\subsubsection{Scalar invariants 1: volume ratio}\label{subsubsec:delta1}
Let $\sigma_i(\mathcal{G})$ ($i\,=\,1, 2, 3$) be the eigenvalues of $\mathcal{G}$. Define the first scalar invariant as the volume ratio of fluctuations, $\delta_1$, as
\begin{definition}
\begin{eqnarray}
\delta_1 = \tr \mathcal{L}_{\mathcal{G}} = \log \det \mathcal{G} = \log \left( \frac{\det \mathcal{C}}{\det \overline{\mathcal{C}}} \right). \nonumber
\end{eqnarray}
\label{def:delta1}
\end{definition}
when $\delta_1 = 0$, the mean and the instantaneous conformation tensors have the same volume and when $\delta_1$ is negative (positive), the instantaneous conformation tensor has smaller (larger) volume than the mean volume. 

\subsubsection{Scalar invariants 2: shortest distance from mean}\label{subsubsec:delta2}
When $\mathcal{C} = \overline{\mathcal{C}}$, we have $\mathcal{G} = I$. When $\mathcal{C} \ne \overline{\mathcal{C}}$, we wish to consider the shortest path between $I$ and $\mathcal{G}$ as a measure of the magnitude of fluctuations. Using equation~\eref{eqn:T17}, we consider the squared geodesic distance related with this path,
\begin{definition}
\begin{eqnarray}
\delta_2 = \tr \mathcal{L}^2_{\mathcal{G}} = d^2({\mathbf I}, \mathcal{G}) = \sum^3_{i=1} \left(\log \sigma_i\right)^2. \nonumber
\end{eqnarray}
\label{def:delta2}
\end{definition}
Using equation~\eref{eqn:T17}, we can verify that $d^2({\mathbf I}, \mathcal{G}) = d^2({\mathbf I}, \mathcal{G}^{-1})$, which implies that this squared geodesic treats both expansions and compressions identically. The affine invariance property (lemma~\ref{lem:AI}) ensures that $d({\mathbf I}, \mathcal{G}) = d([{\mathbf I}]_{{\mathbf A}}, [\mathcal{G}]_{{\mathbf A}})$. With ${\mathbf A} = \overline{{\mathbf F}}$, we obtain,
\begin{eqnarray}
d^2({\mathbf I}, \mathcal{G}) = d^2(\overline{\mathcal{C}}, [\mathcal{G}]_{\overline{F}}) = d^2(\overline{\mathcal{C}}, \mathcal{C}) = d^2(\overline{\mathcal{C}}^{-1}, \mathcal{C}^{-1}).
\label{eqn:delta2}
\end{eqnarray}
The last equality in equation~\eref{eqn:delta2} exhibits the fact that the metric introduced in equation~\eref{eqn:T17}, handles expansions and compressions on equal terms, unlike the regular Euclidean metric (or the Frobenius norm). 

\subsubsection{Scalar invariants 3: anisotropy index}\label{subsubsec:delta3}
Following Hameduddin~\cite{Hameduddin2018}, we define the anisotropy index, $\delta_3$ as the squared geodesic distance between $\mathcal{G}$ and the closest isotropic tensor,
\begin{definition}
\begin{eqnarray}
\delta_3 = \inf d^2(a{\mathbf I}, \mathcal{G}) = \inf_a \tr(\mathcal{G} - (\log a){\mathbf I})^2. \nonumber
\end{eqnarray}
\label{def:delta3}
\end{definition}
Through differentiation, we find that $a = \left(\prod\limits_{i=1}^{3} \sigma_i\right)^{1/3} = \left(\det \mathcal{G}\right)^{1/3}$ is the minimizing stationary point, which implies that,
\begin{eqnarray}
\delta_3 = d^2(\left(\det \mathcal{G}\right)^{1/3}{\mathbf I}, \mathcal{G}) = \delta_2 - \frac{1}{3} \delta^2_1.
\end{eqnarray}
Notice that $\delta_3 = 0$ if and only if $ \delta^2_1 = 3 \delta_2$, in which case $\mathcal{G}$ reduces to an isotropic tensor.

Finally, we surmise that equations~\eref{eqn:Momentum},~\eref{eqn:Poisson} and~\eref{eqn:ST} alongwith definitions~\ref{def:ST},~\ref{def:delta1},~\ref{def:delta2} and~\ref{def:delta3} form the complete set of equations governing the dynamics of viscoelastic sub-diffusive flows. 

\section{Perturbative expansion for weakly nonlinear deformation}\label{sec:perturb}
A weakly nonlinear expansion up to the $K$th power of the velocity field is given by,
\begin{eqnarray}
  \Omega &= \overline{\Omega} + \sum\limits_{k = 1}^K \epsilon^k \Omega_k, \nonumber \\
  \psi &= \overline{\psi} + \sum\limits_{k = 1}^K \epsilon^k \psi_k,
  \label{eqn:Pert1}
\end{eqnarray}
where the superscript, $\overline{(\,\,\,) }$, denote the mean values and $\Omega_k(\mathbf{x}, t), \,\, \psi_k,\,\,\,\left(k \in [1,\,K]\right)$ are the perturbed vorticities and stream-functions of the $k$th-order, respectively. A similar expansion for $\mathcal{G}$ is inappropriate because it is positive definite and there is no {\em a priori} guarantee on this property with regular arithmetic expansion. Instead, we adopt the geometric expansion by multiplicatively decomposing the fluctuating deformation gradient tensor into $K$ separate components,
\begin{eqnarray}
\mathcal{L}_{\rm{Pert}} = \mathcal{L}_1 \mathcal{L}_2 \ldots \mathcal{L}_k
\label{eqn:Pert2}
\end{eqnarray}

Using definition~\ref{def:ST} as well as the matrix logarithm of $\mathcal{G}$ (=$\mathcal{L}_{\mathcal{G}}$), we can express the perturbed structure tensor at $k$th-stage of decomposition as,
\begin{eqnarray}
{\color{black} \mathcal{G}_k^{\epsilon^k}} = \mathcal{L}_k \mathcal{L}^T_k  = \rme^{\epsilon^k \mathcal{L}_{\mathcal{G}_k}},
  \label{eqn:Pert3}
\end{eqnarray}
where $\mathcal{L}_{\mathcal{G}_0} = {\mathbf 0}$. From equation~\eref{eqn:Pert2}, we can associate a perturbed tensor, 
\begin{eqnarray}
\mathcal{L}_k = \rme^{\epsilon^k \mathcal{L}_{\mathcal{G}_k}/2}. 
  \label{eqn:Pert4}
\end{eqnarray}
At this stage, we remark that although we assume that each $\mathcal{L}_k$ is positive definite, the product of positive definite tensors, $\mathcal{L}_{\rm{Pert}}$ (equation~\eref{eqn:Pert2}) is not necessarily positive definite. However, since $\mathcal{L}_{\rm{Pert}} \mathcal{L}^T_{\rm{Pert}} = \mathcal{L}\mathcal{L}^T = \mathcal{G}$, we can show via a polar decomposition that $\mathcal{L}_{\rm{Pert}} = \mathcal{L} \mathbf{R}$ for some rotation tensor, $\mathbf{R}$. Substituting equation~\eref{eqn:Pert4} and~\eref{eqn:Pert2} in equation~\eref{eqn:Pert3}, we arrive at the necessary expansion,
%
\begin{eqnarray}
\fl \mathcal{G} &=& \rme^{\epsilon \mathcal{L}_{\mathcal{G}_1}/2} \ldots \rme^{\epsilon^{K-1} \mathcal{L}_{\mathcal{G}_{K-1}}/2} \rme^{\epsilon^K \mathcal{L}_{\mathcal{G}_K}} \rme^{\epsilon^{K-1} \mathcal{L}_{\mathcal{G}_{K-1}}/2} \ldots \rme^{\epsilon \mathcal{L}_{\mathcal{G}_1}/2},\nonumber \\
\fl &=& \mathbf{I} \!+\! \epsilon \mathcal{L}_{\mathcal{G}_1} \!+\! \epsilon^2 \left( \frac{\mathcal{L}^2_{\mathcal{G}_1}}{2} + \mathcal{L}_{\mathcal{G}_2}\right)\! +\! \epsilon^3 \!\! \left( \frac{\mathcal{L}^3_{\mathcal{G}_1}}{6} + \frac{1}{2}(\mathcal{L}_{\mathcal{G}_1} \mathcal{L}_{\mathcal{G}_2} + (\mathcal{L}_{\mathcal{G}_1} \mathcal{L}_{\mathcal{G}_2})^T) + \mathcal{L}_{\mathcal{G}_3} \right) \!+\! \ldots,
  \label{eqn:Pert5}
\end{eqnarray} 
%
where the second equality in equation~\eref{eqn:Pert5} makes use of the matrix exponential, $\rme^{\epsilon^k \mathcal{L}_{\mathcal{G}_k}} = \sum\limits_{q=0}^\infty \epsilon^{kq} \mathcal{L}^q_{\mathcal{G}_k} / q!$. Substituting expansions~\eref{eqn:Pert1} and~\eref{eqn:Pert5} in equations~\eref{eqn:Momentum},~\eref{eqn:Poisson} and~\eref{eqn:ST}, we arrive at the $\mathcal{O}(\epsilon)$ model equations,
\numparts
\label{eqn:Pert6}
\begin{eqnarray}
\fl & Re\left[ \frac{\partial^\alpha \Omega_1}{\partial t^\alpha} + \overline{\mathbf{v}} \cdot \nabla \Omega_1 + \mathbf{v}_1 \cdot \nabla \overline{\Omega} \right] = \nu \nabla^2 \Omega_1 + \frac{1 - \nu}{W_e} \nabla \times \nabla \cdot \left(\overline{\mathbf{F}} \mathcal{L}_{\mathcal{G}_1} \overline{\mathbf{F}}^T \right), \label{eqn:Pert6a} \\
\fl &\nabla^2 \psi_1 = - \Omega_1, \label{eqn:Pert6b} \\
\fl &\frac{\partial^\alpha}{\partial t^\alpha} \mathcal{L}_{\mathcal{G}_1} + \overline{\mathbf{v}} \cdot \nabla \mathcal{L}_{\mathcal{G}_1} = 2 \rm{sym}\left( \mathfrak{F}(\mathbf{v}_1) + \mathcal{L}_{\mathcal{G}_1} \mathfrak{F}(\overline{\mathbf{v}}) \right) - \frac{1}{W_e}\mathcal{L}_{\mathcal{G}_1}, \label{eqn:Pert6c}
\end{eqnarray}
\endnumparts
as well as the $\mathcal{O}(\epsilon^2)$ model equations,
\numparts\label{eqn:Pert7}
\begin{eqnarray}
\fl &Re\left[ \frac{\partial^\alpha \Omega_2}{\partial t^\alpha} + \overline{\mathbf{v}} \cdot \nabla \Omega_2 + \mathbf{v}_2 \cdot \nabla \overline{\Omega} + \mathbf{v}_1 \cdot \nabla \Omega_1 \right] \nonumber\\
\fl & \qquad \qquad \qquad = \nu \nabla^2 \Omega_2  + \frac{1 - \nu}{W_e} \nabla \times \nabla \cdot \left(\overline{\mathbf{F}} \left( \frac{\mathcal{L}_{\mathcal{G}_1}^2}{2} + \mathcal{L}_{\mathcal{G}_2} \right) \overline{\mathbf{F}}^T \right), \label{eqn:Pert7a}  \\
\fl &\nabla^2 \psi_2 = - \Omega_2, \label{eqn:Pert7b} \\
\fl &\frac{\partial^\alpha \mathcal{L}_{\mathcal{G}_2}}{\partial t^\alpha} +\ \overline{\mathbf{v}} \cdot \nabla \mathcal{L}_{\mathcal{G}_2} + \mathbf{v}_1 \cdot \nabla \mathcal{L}_{\mathcal{G}_1} \nonumber \\
\fl & \qquad \qquad \qquad = {\color{black} 2 \rm{sym}\left( \mathfrak{F}(\mathbf{v}_2) + \mathcal{L}_{\mathcal{G}_2} \mathfrak{F}(\overline{\mathbf{v}}) \right) + \frac{\mathcal{L}_{\mathcal{G}_1}^2}{2 W_e} - \frac{\mathcal{L}_{\mathcal{G}_2}}{W_e}} -\mathcal{L}_{\mathcal{G}_1} \rm{sym}\left(\mathfrak{F}(\overline{\mathbf{v}})\right) \nonumber \\
\fl & \qquad \qquad \qquad \mathcal{L}_{\mathcal{G}_1} + \mathcal{L}_{\mathcal{G}_1} \rm{asym} \left(\mathfrak{F}(\mathbf{v}_1)\right) {\color{black} - \rm{asym} \left(\mathfrak{F}(\mathbf{v}_1)\right) \mathcal{L}_{\mathcal{G}_1} }, \label{eqn:Pert7c} 
\end{eqnarray}
\endnumparts
where sym$(\mathbf{A}) = (\mathbf{A}+\mathbf{A}^T)/2$ {\color{black} , asym$(\mathbf{A}) = (\mathbf{A}-\mathbf{A}^T)/2$} and $\mathbf{v}_i = (u_i, v_i) = \left(\partial \psi_i/\partial y, -\partial \psi_i/\partial x\right)$.
\subsection{Linear perturbations}\label{subsec:linear}
As an illustration, we highlight the case of linear perturbative solutions for the 2D viscoelastic channel flow for polymer melts. A rectilinear coordinate system is used with $x, y$ denoting the channel flow direction and the transverse direction, respectively. The origin of this coordinate system is chosen at the left end of the lower wall of the channel. The size of the domain is chosen to be $(x, y) \in \Gamma = [0,\,5] \times [0,\,1]$. The mean flow is assumed to be a plane Poiseuille flow with its variation entirely in the transverse direction, namely,
\begin{eqnarray}
{\mathbf U}_0 = (y - y^2) {\mathbf e_x},
\label{eqn:L1}
\end{eqnarray}
where ${\mathbf e_x}$ is the unit vector along x-direction. The mean flow, ${\mathbf U}_0$, defines the mean vorticity, $\overline{\Omega} = 2y-1$, and the mean stream-function, $\overline{\psi} = \frac{y^2}{2} - \frac{y^3}{3}$. In this case, the initial conditions for the perturbed solution can be constructed via the superposition of the mean flow and the instability mode, as follows,
\begin{eqnarray}
     \Omega|_{t=0} &\approx \overline{\Omega} + \epsilon \Omega_1 = \overline{\Omega} + \epsilon \mathcal{R}\{\widetilde{\Omega}(y)|_{t=0} \rme^{\rmi k x} \}, \nonumber \\
             \psi|_{t=0} &\approx \overline{\psi} + \epsilon \psi_1 = \overline{\psi} + \epsilon \mathcal{R}\{\widetilde{\psi}(y)|_{t=0} \rme^{\rmi k x} \}, \nonumber\\
\mathcal{G}|_{t=0} &\approx \mathbf{I} + \epsilon \mathcal{L}_{\mathcal{G}_1}  = \mathbf{I} + \epsilon \mathcal{R}\{\widetilde{\mathcal{L}_{\color{black}{\mathcal{G}}}}(y)|_{t=0} \rme^{\rmi k x} \},
\label{eqn:L2}
\end{eqnarray}
where $(\Omega_1, \psi_1,\mathcal{L}_{\mathcal{G}_1})$ are the perturbations that are Fourier transformed in the $x$-direction. $\mathcal{R}\{ \}$ denotes the real part of the complex valued function. The equations governing the initial conditions~\eref{eqn:L2} are listed in~\ref{appA}, whose solution is unique {\color{black} upto} an integration constant for the stream-function perturbation (refer ~\fref{fig1}b for the solution).

An initial condition comprising of a small amplitude unstable mode will initially grow exponentially, as predicted by the linear theory. However, nonlinear effects eventually become significant since otherwise, the conformation tensor losses positive definiteness. Hence, our interest to study linear perturbations is to find an estimate of the maximum time during which the perturbed solution can be well approximated by the linear theory, i.~e., along the Euclidean manifold.

Consider an initial condition which is a perturbed base flow, $\mathcal{G}|_{t=0} = \mathbf{I} + \epsilon \mathcal{L}_{\mathcal{G}}$. If we assume that the perturbed mode grows according to linear theory for some time and $\mathcal{G}$ evolves along Euclidean lines with growth rate $\omega > 0$, then $\mathcal{G}(t) = \mathbf{I} + \epsilon \mathcal{L}_{\mathcal{G}} \rme^{\omega t}$. Suppose $\mathcal{L}_{\mathcal{G}}$ is not zero and is harmonic in the spatial direction, then $\mathcal{L}_{\mathcal{G}}$ has a strictly negative eigenvalue somewhere in the domain. This is because a harmonic perturbation leads to regions where the polymers are much more compressed than the maximum expansion, in a volumetric sense, since positive and negative additive perturbations to the mean conformation tensor, $\overline{\mathcal{C}}$, with equal magnitude are not of equal magnitude with respect to the natural norm on $\mathbf{PS}_3$. For positive-definiteness of $\mathcal{G}$, we require the eigenvalues, $1 + \epsilon \sigma_i(\mathcal{L}_{\mathcal{G}})\rme^{\omega t} > 0, \,\, (i = 1, 2, 3)$. Whenever $\sigma_i(\mathcal{L}_{\mathcal{G}}) < 0$, the dynamics induces a curvature on the evolution of the perturbed mode along $\mathbf{PS}_3$ before the time, $t_m$, when the eigenvalue of $\mathcal{G}$ crosses zero. Hence, at the point of crossing zero, this time is given by,
\begin{eqnarray}
\omega t_m = -\left( \log \epsilon + \log \max|\sigma_i(\mathcal{L}_{\mathcal{G}})|\right),
\label{eqn:L3}
\end{eqnarray}
where $\max|\sigma_i(\mathcal{L}_{\mathcal{G}})|$ is the magnitude of the largest negative eigenvalue in the domain. Readers are referred to a previous work, on the instability of viscoelastic sub-diffusive channel flows, by the authors~\cite{Chauhan2022} to gain an insight on the procedure for finding, $\sigma_i(\mathcal{L}_{\mathcal{G}})$. Equation~\eref{eqn:L3} serves as a guide for selecting the initial perturbation amplitude, $\epsilon$, based on the time, $t_m$, i~e., by reducing $\epsilon$ we can arbitrarily increase $t_m$ to a desired value.

\begin{figure*}[htbp]
\centering
\includegraphics[width=0.475\linewidth, height=0.4\linewidth]{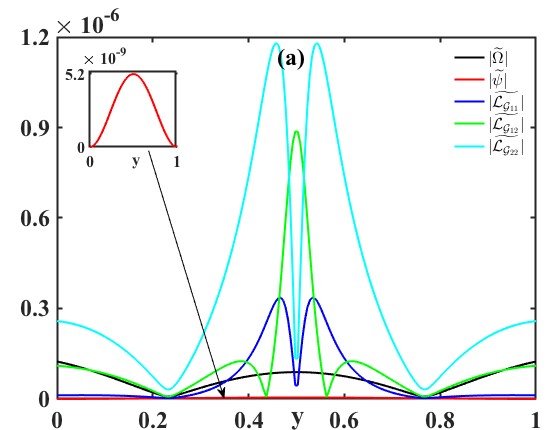}
\includegraphics[width=0.495\linewidth, height=0.375\linewidth]{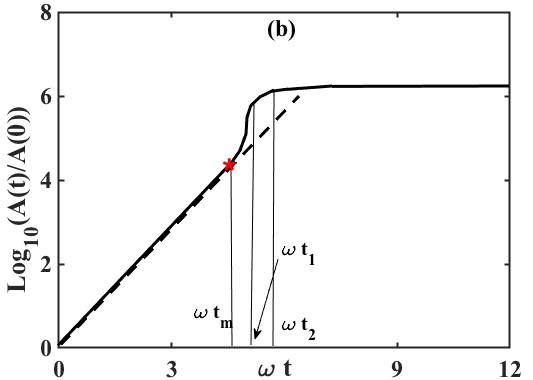}
\caption{(a) Solution to the linearized system of equations (\ref{appA}) subject to the boundary values, $(\widetilde{\psi}(y),\widetilde{\psi'}(y)) = (0, 0)$ at the rigid channel walls, $y = 0$ and $y = 1$, and (b) time evolution of $A$, as defined in equation~\eref{eqn:L4} for parameter values, $We=10.0, Re=70.0, \nu=0.3$ and $\alpha=0.5$. The solid line represents the nonlinear evolution, $A(t) / A(0)$ (equations~(\eref{eqn:Momentum},~\eref{eqn:Poisson} and~\eref{eqn:ST}) with initial conditions~\eref{eqn:L2}) while the dashed line represents the growth of the solution predicted by the linear theory~\eref{eqn:Pert6}. The asterisk ($\ast$) indicates the maximum time, $t_m$ as defined in equation~\eref{eqn:L3}.}
\label{fig1}
\end{figure*}
In order to compare the linear evolution of the unstable modes (equation~\eref{eqn:Pert6} with initial conditions~\eref{eqn:L2}) with the nonlinear modes (equations~(\eref{eqn:Momentum},~\eref{eqn:Poisson} and~\eref{eqn:ST}), the following quantity is utilized,
\begin{eqnarray}
A(t) = \frac{1}{\ell_x \ell_y} \int_\Gamma d^2(\mathbf{I}, \mathcal{G}) d \Gamma,
\label{eqn:L4}
\end{eqnarray}
which measures the perturbations away from the isotropic tensor, $\mathbf{I}$, in the volume-averaged sense~\cite{Hameduddin2018}. $(\ell_x, \ell_y)$ are the lengths of the domain in the flow direction and the transverse direction, respectively. The time evolution of the normalized function, $A(t) / A(0)$, for parameter values, $We=10.0, Re=70.0, \nu=0.3, \alpha=0.5, \epsilon=0.1, k = 0.01$ (refer equations~(\eref{eqn:Momentum},~\eref{eqn:Poisson} and~\eref{eqn:ST} as well as the initial conditions~\eref{eqn:L2}) is shown in ~\fref{fig1}b. We note that the evolution of $A(t) / A(0)$ matches with the one predicted by the linear theory~\eref{eqn:Pert6}, upto the maximum time, $\omega t_m \approx 4.6$ and then shows deviation from the linear growth in the form of an exponential growth upto time, $\omega t_1 \approx 5.2$, followed by an eventual saturation at $\omega t_2 \approx 6.0 $. While the initial exponential deviation can be explained due to the exponential form (refer equation~\eref{eqn:Pert5}) of the structure tensor, the mode saturation is the manifestation of nonlinear effects, which is absent in the linear theory. A detailed description of the nonlinear effects through numerical simulations is outlined in~\sref{sec:results}.
%
%
\section{Direct Numerical Simulations}\label{sec:results}
Next, the fully non-linear model~(\eref{eqn:Momentum},~\eref{eqn:Poisson} and~\eref{eqn:ST}) for planar, viscoelastic channel flow, subject to the initial conditions~\eref{eqn:L2}, is numerically investigated for two specific cases of the fractional order derivative, namely, the monomer diffusion in {\color{black} coarse-grained} Zimm chain solution ($\alpha = \case{2}{3}$)~\cite{Zimm1956} and {\color{black} coarse-grained} Rouse chain melts ($\alpha = \case{1}{2}$)~\cite{Rouse1953}. 

In order to imitate an infinitely long channel, periodic boundary conditions are assumed at the flow inlet and outlet. No-slip (i.~e., $u = v = 0$) and zero tangential conditions (i.~e., $\frac{\partial u}{\partial x} = \frac{\partial v}{\partial x} = 0$) are imposed on the lower wall ($y = 0$) and the upper wall ($y = 1.0$) of the channel, respectively. Further, incompressibility constraint provides an additional condition on the walls: $\frac{\partial v}{\partial y} = 0$. Since the flow is parallel to the channel walls, the walls may be treated as streamline. Thus, the streamfunction value, $\psi$, on the wall is set as a constant. That constant (which may be different on the lower and the upper wall) is found from the no-slip condition. Zero tangential condition imply that all tangential derivatives of streamfunction vanish on the wall. Thus, the boundary condition for vorticity is found from the Poisson equation~\eref{eqn:Poisson},
\begin{eqnarray}
\frac{\partial^2 \psi}{\partial y^2}|_{wall} = -\Omega_{wall}.
\label{eqn:OmegaBC}
\end{eqnarray}
Finally, the boundary conditions for the structure tensor is constructed from equation~\eref{eqn:ST}, coupled with the no-slip and zero tangential conditions, as follows,
\begin{eqnarray}
& \frac{\partial^\alpha \mathcal{G}_{11}}{\partial t^\alpha} + \frac{2}{\sqrt{d}} \frac{\partial^2 \psi}{\partial y^2} \left(F_1 F_2 \mathcal{G}_{11} + F_2^2 \mathcal{G}_{12}\right) + \frac{\mathcal{G}_{11}}{We} - \frac{F_2^2 + F_4^2}{d We} = 0, \nonumber \\
& \frac{\partial^\alpha \mathcal{G}_{12}}{\partial t^\alpha} - \frac{1}{\sqrt{d}} \frac{\partial^2 \psi}{\partial y^2} \left(F_1^2 \mathcal{G}_{11} - F_2^2 \mathcal{G}_{22}\right) + \frac{\mathcal{G}_{12}}{We} + \frac{F_1 F_2 + F_2 F_4}{d We} = 0, \nonumber \\
& \frac{\partial^\alpha \mathcal{G}_{22}}{\partial t^\alpha} - \frac{2}{\sqrt{d}} \frac{\partial^2 \psi}{\partial y^2} \left(F_1^2 \mathcal{G}_{12} + F_1 F_2 \mathcal{G}_{22}\right) + \frac{\mathcal{G}_{22}}{We} - \frac{F_1^2 + F_2^2}{d We} = 0,
\label{eqn:ExtraStressBC}
\end{eqnarray}
where the variables $d, F_i\,\, (i\,=\,1, 2, 4)$ are listed in~\ref{appA}.

The domain, $\Gamma = [0,\,5] \times [0,\,1]$, is discretized using $76 \times 51$ points such that the discrete points are equally spaced at $\Delta x = \frac{5}{75}$ and $\Delta y = \frac{1}{50}$, excluding the boundary points, where the periodic / Dirichlet boundary conditions are imposed in the flow direction / transverse direction, respectively. The  implicit-explicit time-adaptive, $\theta$-method~\cite{Chauhan2022b} is utilized for the numerical outcome, with the variable `$\theta$' is fixed at $\theta=1.0$. The minimum and the maximum values of the variable time-step are chosen as $\Delta t_{min}=10^{-3}$ and $\Delta t_{max}=1.6 \times 10^{-2}$, respectively. The Poisson equation~\eref{eqn:Poisson} is iteratively solved using the Gauss-Siedal iteration technique, without an explicit inversion of the coefficient matrix. {\color{black} Other algorithmic details may be found in a recently published work by the authors~\cite{Chauhan2022b}. Since the impact of elasticity and inertia on the flow rheology has been reported elsewhere~\cite{Chauhan2022b} and since our goal in this article is to highlight the advantages of the newly developed metrics over the traditional metrics, the flow-material parameters are fixed at $Re=70, We=10, \nu=0.3, \epsilon=0.1$.}

\subsection{{\color{black} Coarse-grained} Zimm's model}\label{subsec:Zimm}
The Zimm's model~\cite{Zimm1956} predicts the (`shear rate and polymer concentration independent') viscosity of the polymer solution by calculating the hydrodynamic interaction of flexible polymers (an idea which was originally proposed by Kirkwood~\cite{Kirkwood1954}) by approximating the chains using a bead-spring setup. 

The instantaneous principle invariant of the structure tensor, $\tr \mathcal{G}$ as well as the new invariants (refer~\sref{subsubsec:delta1}, \sref{subsubsec:delta2}, \sref{subsubsec:delta3}) for the Zimm's flow rheology are presented in ~\fref{fig2} (left column). The contours of the other principle invariants are qualitatively similar to those of the $\tr \mathcal{G}$ and are thus not shown here. Observe that the structures appearing in the Zimm's model are significantly smaller in magnitude than the Rouse model (\sref{subsec:Rouse}). Physically, the formation of these `spatiotemporal macrostructures' are associated with the entanglement of the polymer chains at microscale~\cite{Rubenstein2003}, leading to localized, non-homogeneous regions with higher viscosity.

~\Fref{fig2c} shows the logarithmic volume ratio, $\delta_1$. This quantity is qualitatively similar to the principle invariant, $\tr \mathcal{G}$, and hence we have a visual resemblance in ~\fref{fig2a},~\fref{fig2c}. However, we find that in ~\fref{fig2c}, we have predominantly negative values, indicating that the instantaneous volume is smaller than the volume of the mean conformation. Further, observe regions of very high values of $\delta_1$ interspersed with regions of very low values, especially near the wall. This observation is the result of the slow diffusion of polymers in sub-diffusive flows since there is no direct mechanism for smoothening out these `elastic shocks' in the tensor field. The measure, $\delta_1$ does not distinguish between volume-preserving deformations. For example, $\delta_1$ does not distinguish between $\mathcal{G}$ and $\det(\mathcal{G}) \mathcal{G}_1$, for any tensor $\mathcal{G}_1$ with a unit determinant. In particular, $\delta_1 = 0$, does not imply $\mathcal{C} = \overline{\mathcal{C}}$. In order to identify regions where the instantaneous polymer conformation equals the mean conformation and quantify the deviation when it is not, we use the squared geodesic distance away from the origin $(\mathbf{I})$ along the Riemannian manifold, $\delta_2$ (~\fref{fig2e}). ~\Fref{fig2e} indicates that the conformation tensor field is significantly far away from $\overline{\mathcal{C}}$, near the wall. This deviation of $\delta_2$, in the near wall region, can be explained via the `memory effect', previously observed in regular Oldroyd-B fluids~\cite{Sircar2019}. Finally, ~\fref{fig2g} shows the instantaneous contours of the anisotropy index, $\delta_3$. This index shows how close the shape of instantaneous conformation tensor is to the shape of the mean conformation tensor, irrespective of volumetric changes. The visual resemblance of $\delta_2$ and $\delta_3$ suggests that deformations to the mean conformation are largely anisotropic, near wall.
\begin{figure*}[htbp]
\centering
\subfloat[]{\label{fig2a}\includegraphics[width=0.49\linewidth, height=0.3\linewidth]{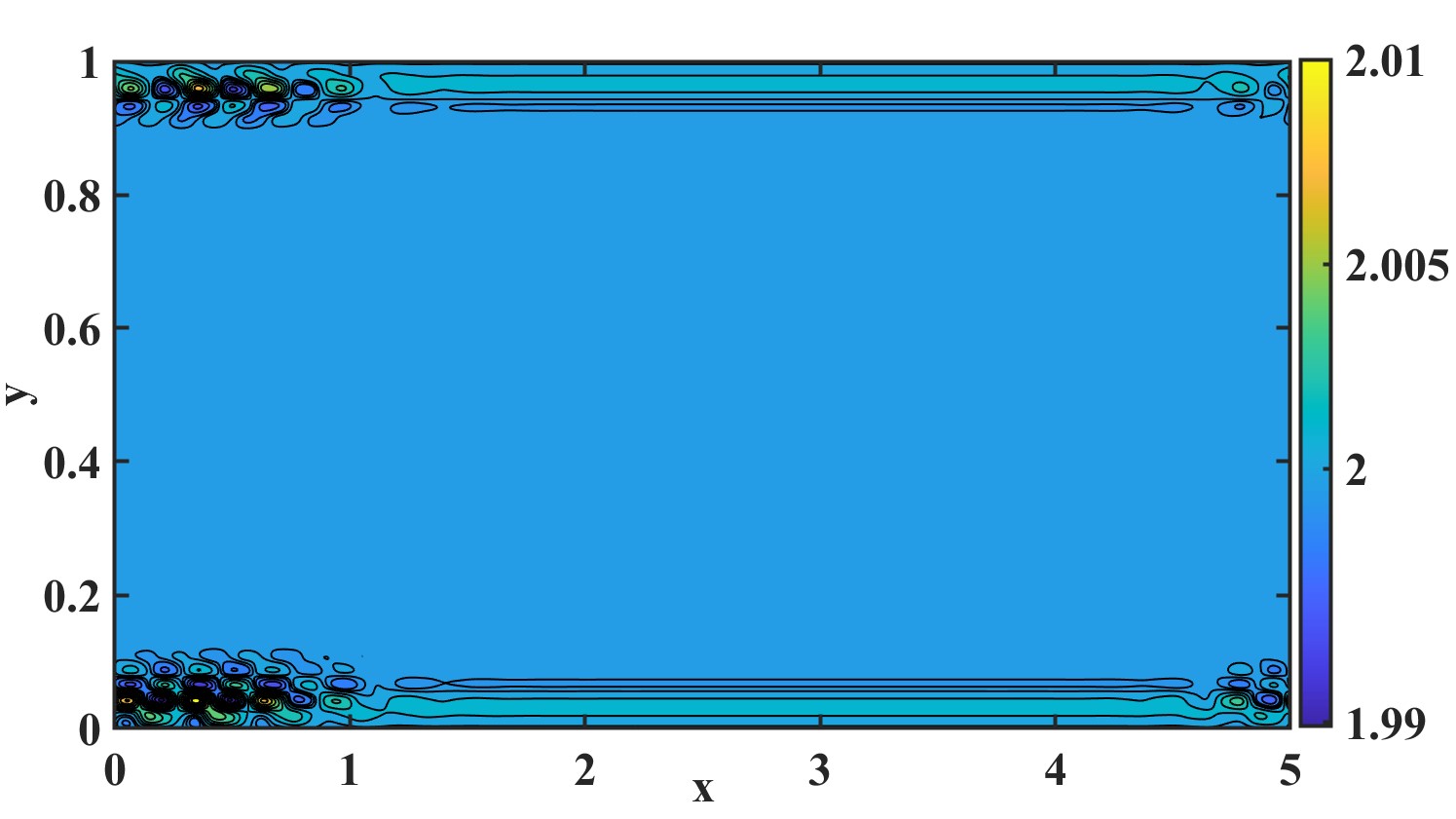}}
\hspace{0.1cm}
\subfloat[]{\label{fig2b}\includegraphics[width=0.49\linewidth, height=0.3\linewidth]{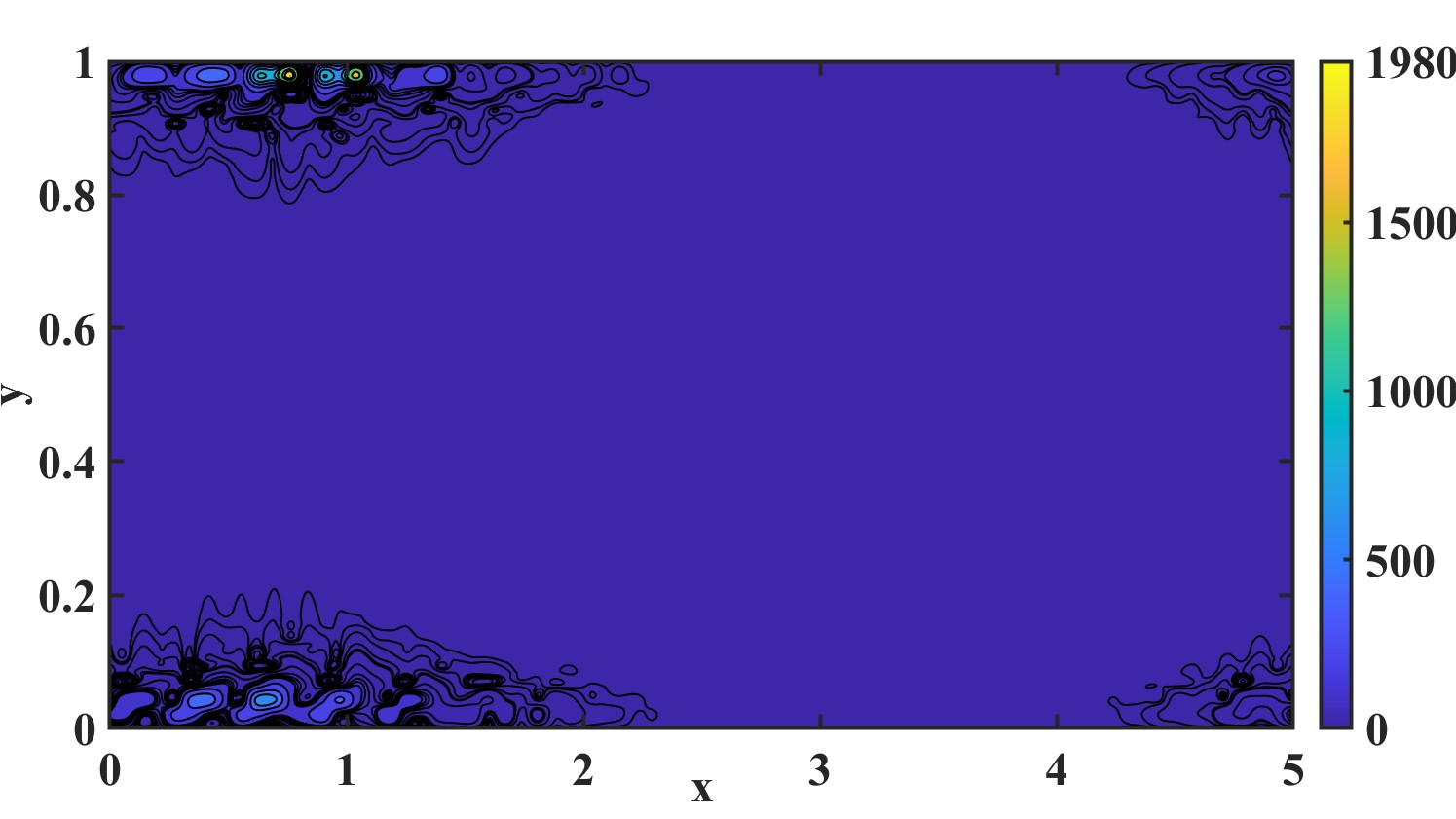}}\\
\subfloat[]{\label{fig2c}\includegraphics[width=0.49\linewidth, height=0.3\linewidth]{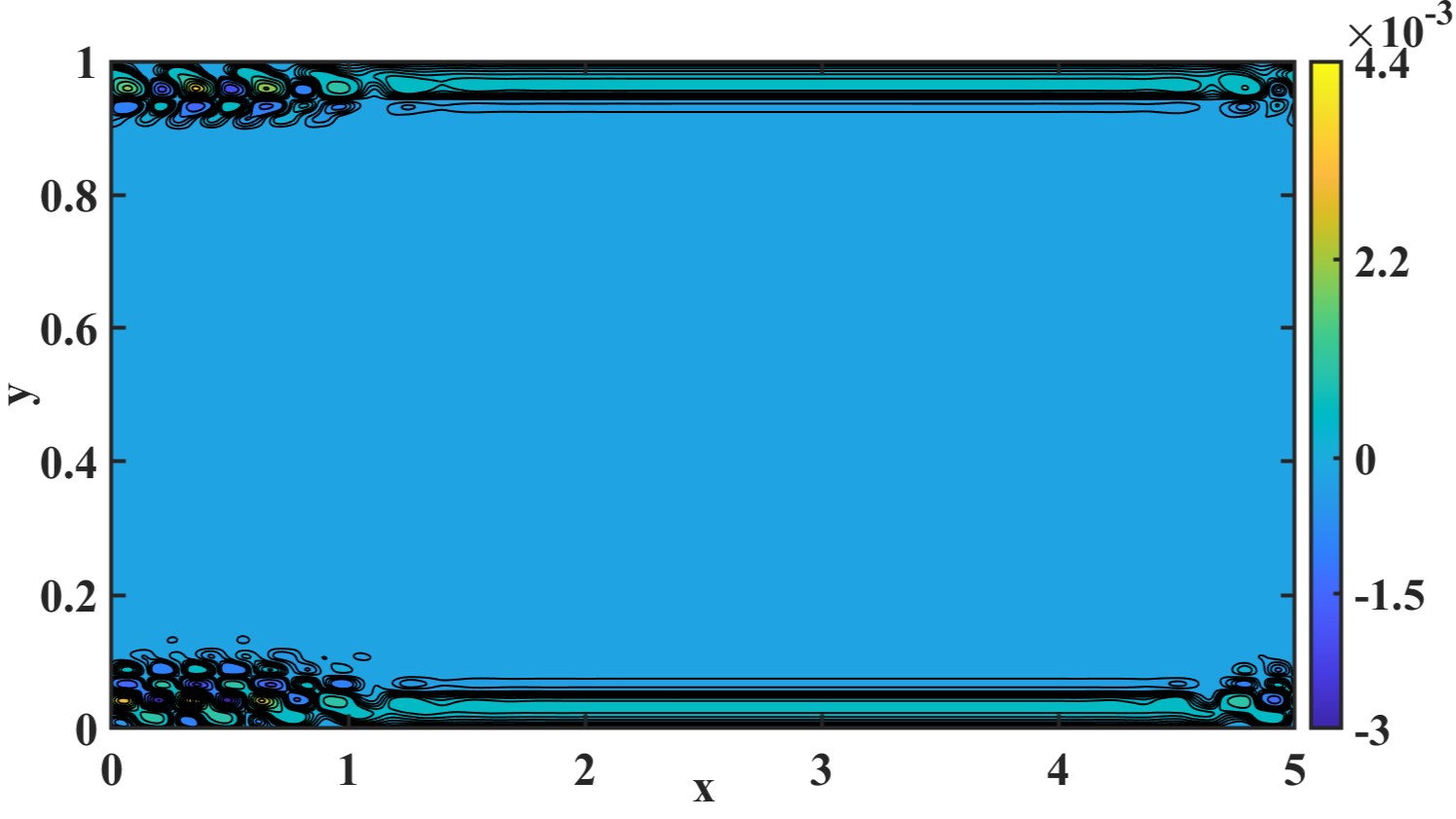}}
\hspace{0.1cm}
\subfloat[]{\label{fig2d}\includegraphics[width=0.49\linewidth, height=0.3\linewidth]{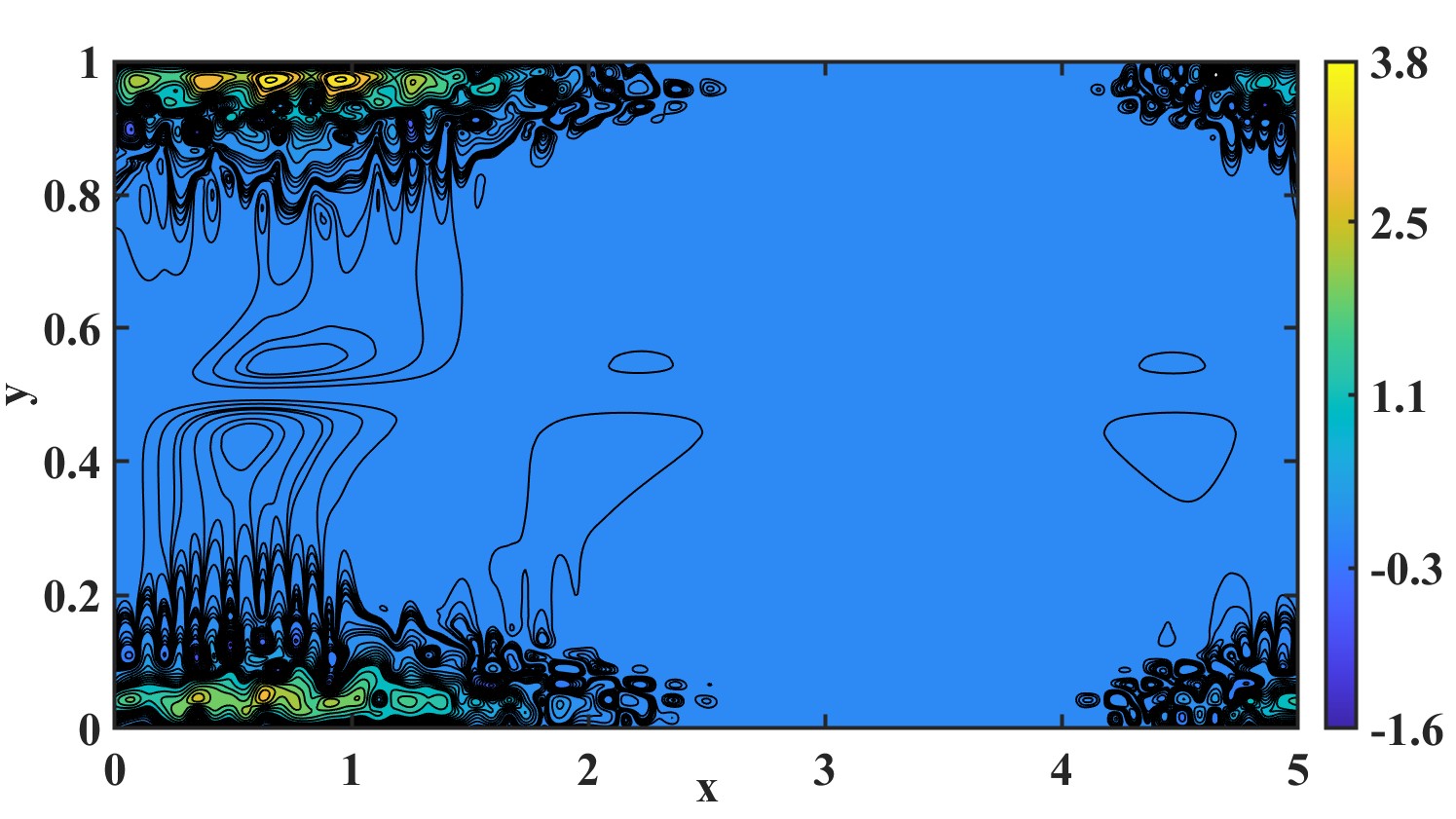}}\\
\subfloat[]{\label{fig2e}\includegraphics[width=0.49\linewidth, height=0.3\linewidth]{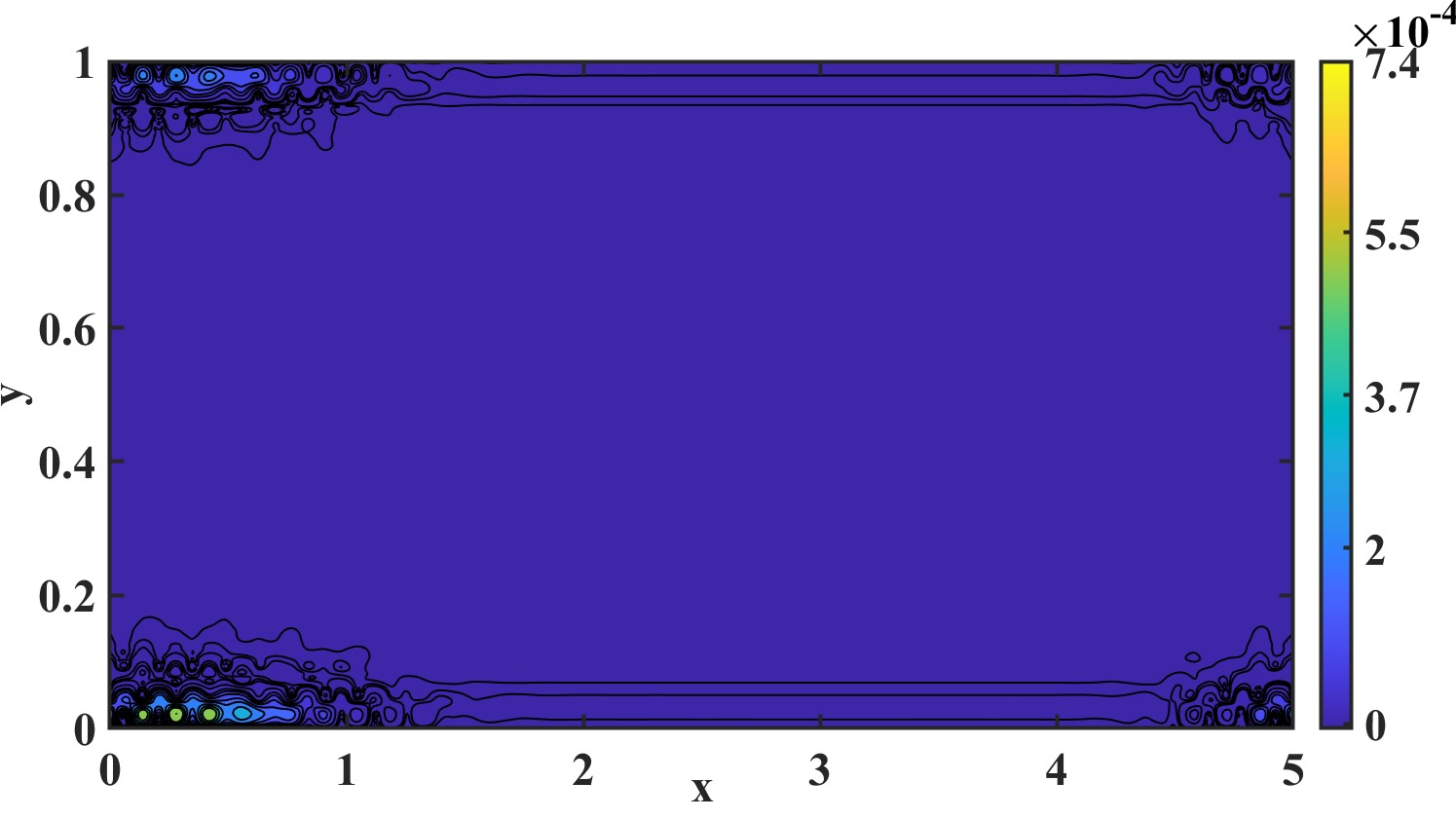}}
\hspace{0.1cm}
\subfloat[]{\label{fig2f}\includegraphics[width=0.49\linewidth, height=0.3\linewidth]{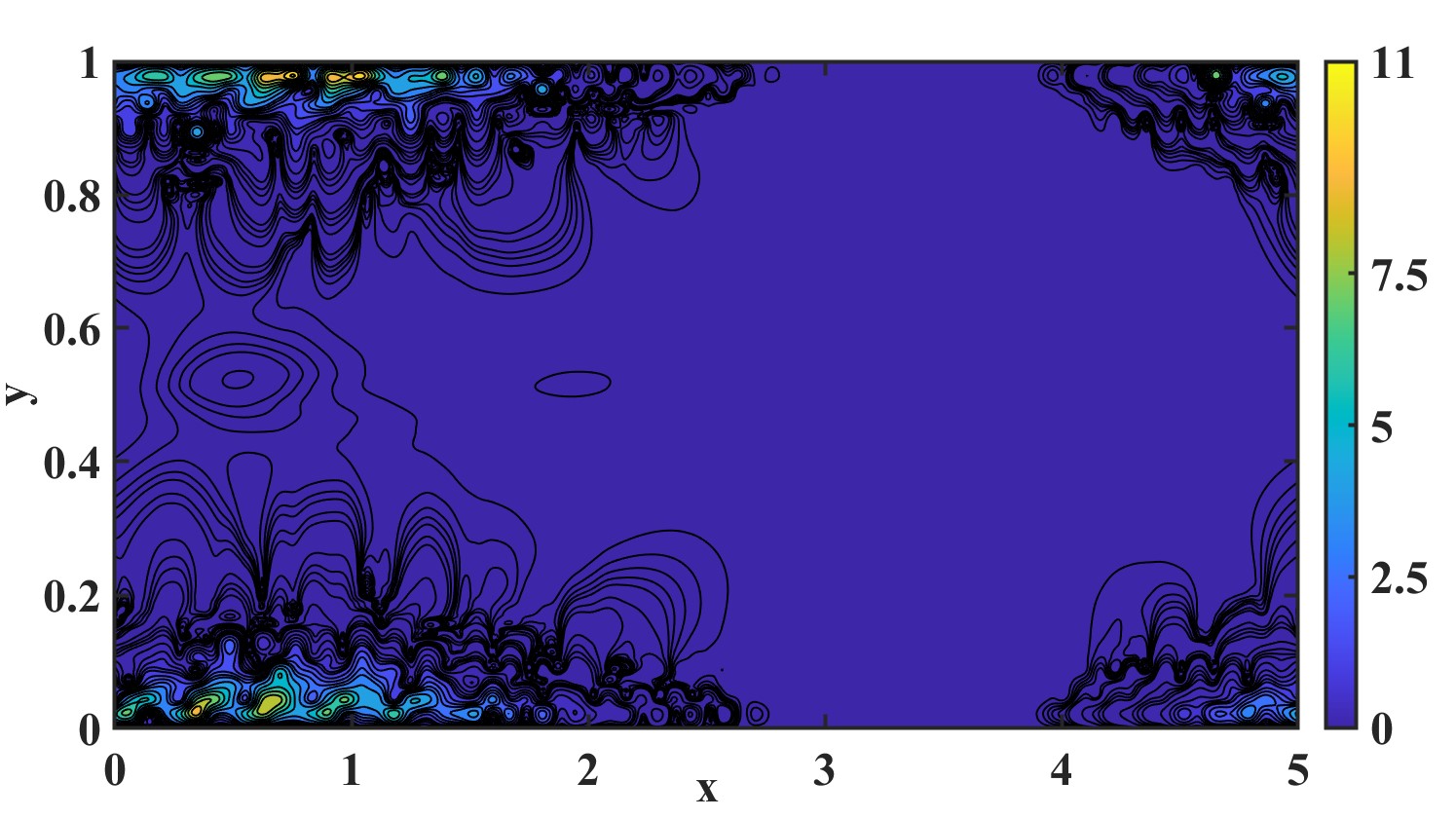}}\\
\subfloat[]{\label{fig2g}\includegraphics[width=0.49\linewidth, height=0.3\linewidth]{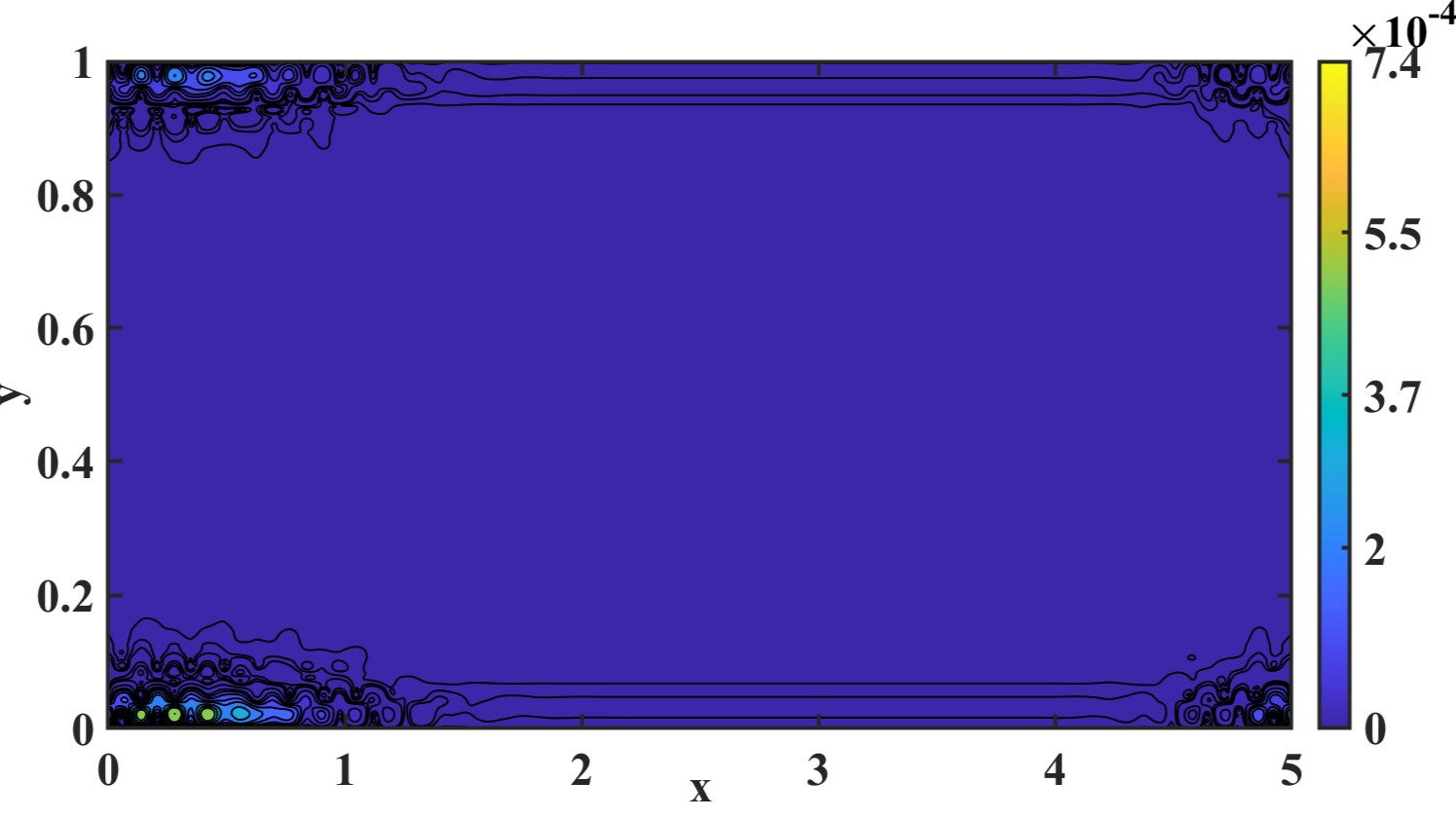}}
\hspace{0.1cm}
\subfloat[]{\label{fig2h}\includegraphics[width=0.49\linewidth, height=0.3\linewidth]{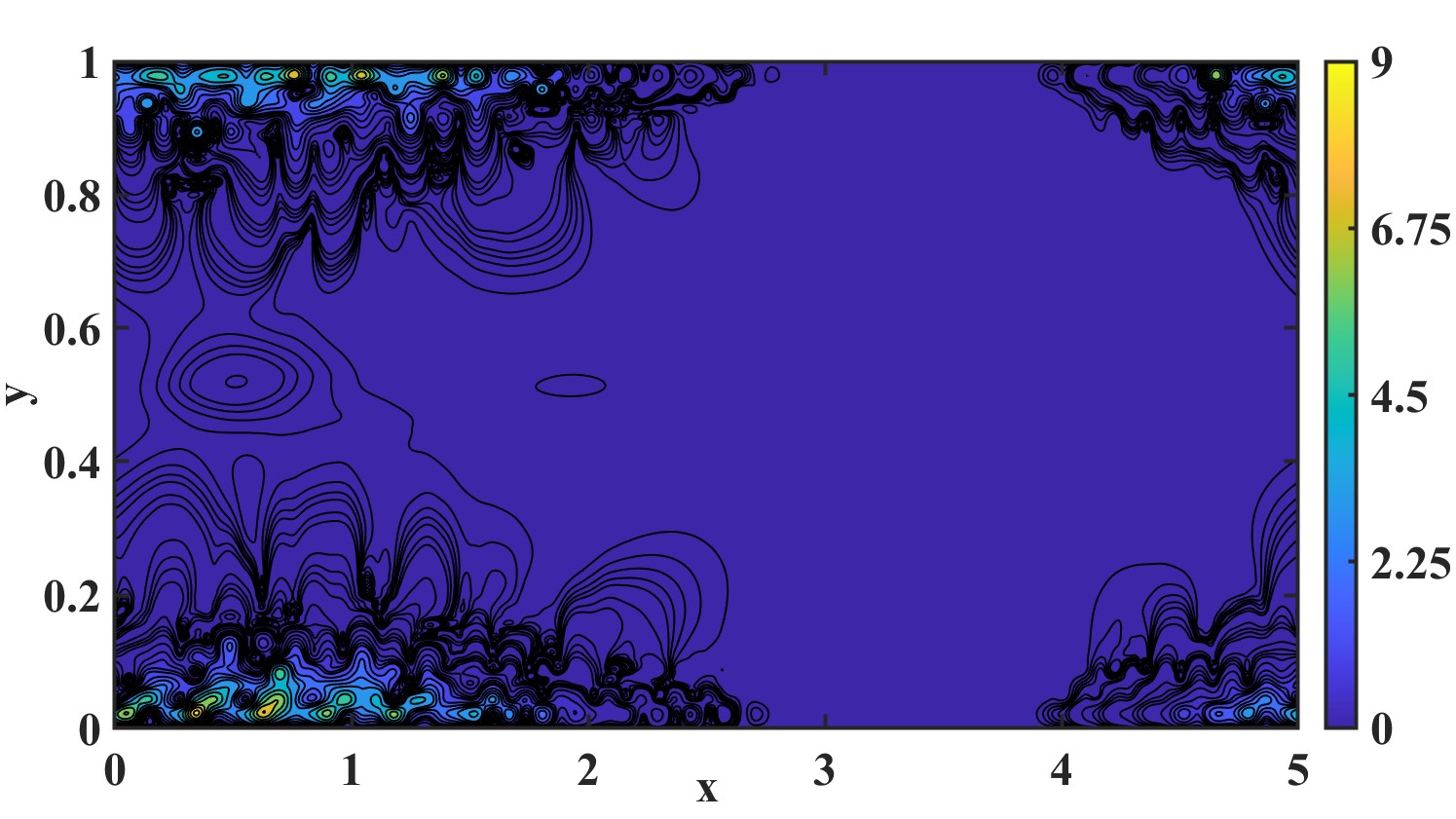}}
\caption{Contours of instantaneous (a, b) principle invariant of $\mathcal{G}$, $\tr \mathcal{G}$, (c, d) volume ratio, $\delta_1$, (e, f) shortest distance from mean, $\delta_2$, (g, h) anisotropy index, $\delta_3$, for the Zimm's model (left column), and the Rouse model (right column) at simulation time, $T=4.4$.}
\label{fig2}
\end{figure*}

\subsection{{\color{black} Coarse-grained} Rouse model}\label{subsec:Rouse}
The Rouse model~\cite{Rouse1953} predicts that the viscoelastic properties of the polymer chain via a generalized Maxwell model, where the elasticity is governed by a single relaxation time, which is independent of the number of Maxwell elements (or the so-called `submolecules'). The Rouse model represents `thicker' fluid, or fluids with slower diffusion than the Zimm's solution, due to the smaller fractional time-derivative ($\alpha = \case{1}{2}$, ~\fref{fig2} (right column)). Flows with smaller time-derivative (or the thicker polymer melt case), are those associated with higher concentration of polymers per unit volume. Experiments~\cite{Fogelson2015} have shown that non-Newtonian fluids with a larger polymer concentration, have a greater tendency (for the polymer strands) to agglomerate, the so-called `over-crowding effect'~\cite{Doi1996}. After comparing the respective range of all the invariants in both models, we find that our numerical simulations corroborate the experiments, namely: (i) the macrostructures in the Rouse model are more prominent, both in size as well as in magnitude (comparing ~\fref{fig2c} versus ~\fref{fig2d}), and (ii) the alternating regions of expansion interlaced with compression are more heterogeneous in the Rouse model (comparing ~\fref{fig2e},~\fref{fig2g} versus ~\fref{fig2f},~\fref{fig2h}, respectively). These observations indicate that the Rouse model is comparatively more unstable than the Zimm’s model, at the chosen values of the flow-material parameters.

\section{Concluding Remarks} \label{sec:conclusions}
In this paper, we have developed a mathematically consistent decomposition of the conformation tensor, $\mathcal{C}$, into the structure tensor, $\mathcal{G}$ (definition~\ref{def:ST}), for viscoelastic sub-diffusive flows, that resolves the difficulties associated with the traditional arithmetic decomposition. We characterized the fluctuations in $\mathcal{G}$ by using a geometry specifically constructed for $\mathbf{PS}_3$ and obtained three scalar measures: the volume ratio, $\delta_1$ (definition~\ref{def:delta1}), the shortest distance from the mean conformation, $\delta_2$ (definition~\ref{def:delta2}) and the anisotropy index, $\delta_3$ (definition~\ref{def:delta3}). The linear perturbation studies and the fully nonlinear simulations provided interesting insights about the instantaneous polymer conformation tensor that are not readily available from an arithmetic decomposition of $\mathcal{C}$, including: (i) evaluation of a (perturbation amplitude dependent) maximum time during which the linear perturbative solution can be well approximated by the weakly nonlinear solution, along the Euclidean manifold, (ii) a better resolution of the instantaneous regions of elastic shocks (which are alternating regions of expanded and compressed polymer volume, as compared with the volume of the mean conformation), (iii) a better measure to detect neighborhoods where the mean conformation tensor tends to be significantly different in comparison to the instantaneous conformation tensor, and (iv) a better representation of the proximity of the shape of the instantaneous conformation tensor, in comparison to the shape of the mean conformation tensor.

While the analysis presented here has delivered a general framework to provide a quantitative explanation of the previously published experimental findings~\cite{Riley1988,Nandagopalan2018,Zarabadi2018,Zarabadi2019}, the detailed physics of the flow induced structure formation in viscoelastic sub-diffusive flows is currently underway.

\ack 
T.C. and S.S. acknowledge the financial support of the Grant CSIR-SRF 09/1117(0012)/2020-EMR-I and DST CRG/2022/000073, respectively.

\appendix
\section{Linearized system of equations governing initial conditions for Equation~\eref{eqn:Pert6}} \label{appA}

Assuming a normal mode expansion for the perturbed field, $\phi_1 = \widetilde{{\color{black}\phi}(y)}\rme^{\rmi k x}$ (where $\mathbf{\phi}_1 = (\Omega_1, \psi_1, \mathcal{L}_{\mathcal{G}_1})$), equation~\eref{eqn:Pert6} reduces to
\begin{eqnarray}
\fl &Re\left( (y-y^2) \widetilde{\Omega} (\rmi k ) - 2 \widetilde{\psi} (\rmi k)\right) = \nu \left( \widetilde{\Omega} (\rmi k)^2 + \widetilde{\Omega}'' \right) + \frac{1-\nu}{We} \left(-k^2 F_1 F_2 \widetilde{\mathcal{L}_{\mathcal{G}_{11}}} - k^2  F_2^2 
\right. \nonumber\\
\fl & \left.
\widetilde{\mathcal{L}_{\mathcal{G}_{12}}} - k^2 F_1 F_4 \widetilde{\mathcal{L}_{\mathcal{G}_{12}}} - k^2 F_2 F_4 \widetilde{\mathcal{L}_{\mathcal{G}_{22}}} - {F_1}'' F_2 \widetilde{\mathcal{L}_{\mathcal{G}_{11}}} - F_1 {F_2}'' \widetilde{\mathcal{L}_{\mathcal{G}_{11}}} - F_1 F_2 \widetilde{\mathcal{L}_{\mathcal{G}_{11}}}'' - {F_2^2}'' 
\right. \nonumber\\
\fl & \left.
\widetilde{\mathcal{L}_{\mathcal{G}_{12}}} - F_2^2 \widetilde{\mathcal{L}_{\mathcal{G}_{12}}}'' \!-\!{F_1}'' F_4 \widetilde{\mathcal{L}_{\mathcal{G}_{12}}} - F_1 {F_4}'' \widetilde{\mathcal{L}_{\mathcal{G}_{12}}} \!-\!F_1 F_4 \widetilde{\mathcal{L}_{\mathcal{G}_{12}}}'' - {F_2}'' F_4 \widetilde{\mathcal{L}_{\mathcal{G}_{22}}} \!-\!  F_2 {F_4}'' \widetilde{\mathcal{L}_{\mathcal{G}_{22}}} 
\right. \nonumber\\
\fl & \left.
\!-\! F_2 F_4 \widetilde{\mathcal{L}_{\mathcal{G}_{22}}}'' - 2{F_1}' {F_2}' \widetilde{\mathcal{L}_{\mathcal{G}_{11}}} - 2 {F_1}' F_2 \widetilde{\mathcal{L}_{\mathcal{G}_{11}}}' - 2 {F_1}' {F_4}' \widetilde{\mathcal{L}_{\mathcal{G}_{12}}} - 2 {F_1}' F_4 \widetilde{\mathcal{L}_{\mathcal{G}_{12}}}' - 2 F_1 {F_2}' 
\right. \nonumber\\
\fl & \left.
\widetilde{\mathcal{L}_{\mathcal{G}_{11}}}' - 2 F_1 {F_4}' \widetilde{\mathcal{L}_{\mathcal{G}_{12}}}'  - 2 {F_2}' {F_4}' \widetilde{\mathcal{L}_{\mathcal{G}_{22}}} - 2 {F_2}' {F_4} \widetilde{\mathcal{L}_{\mathcal{G}_{22}}}' - 4 F_2 {F_2}' \widetilde{\mathcal{L}_{\mathcal{G}_{12}}}'- 2 F_2 {F_4}' \widetilde{\mathcal{L}_{\mathcal{G}_{22}}}' + 
\right. \nonumber\\
\fl & \left.
2 \rmi k F_2 {F_2}' \widetilde{\mathcal{L}_{\mathcal{G}_{11}}} + \rmi k  F_2^2 \widetilde{\mathcal{L}_{\mathcal{G}_{11}}}' + 2 \rmi k {F_2}' F_4 \widetilde{\mathcal{L}_{\mathcal{G}_{12}}} + 2 \rmi k F_2 {F_4}' \widetilde{\mathcal{L}_{\mathcal{G}_{12}}} + 2 \rmi k F_2 F_4 \widetilde{\mathcal{L}_{\mathcal{G}_{12}}}' + 2 \rmi k 
\right. \nonumber\\
\fl & \left.
F_4 {F_4}' \widetilde{\mathcal{L}_{\mathcal{G}_{22}}} + \rmi k F_4^2 \widetilde{\mathcal{L}_{\mathcal{G}_{22}}}' - 2 \rmi k F_1 {F_1}' \widetilde{\mathcal{L}_{\mathcal{G}_{11}}} - \rmi k F_1^2 \widetilde{\mathcal{L}_{\mathcal{G}_{11}}}' - 2 \rmi k {F_1}' F_2 \widetilde{\mathcal{L}_{\mathcal{G}_{12}}} - 2 \rmi k  F_1 {F_2}'  
\right. \nonumber\\
\fl & \left.
\widetilde{\mathcal{L}_{\mathcal{G}_{12}}} - 2 \rmi k F_1 F_2 \widetilde{\mathcal{L}_{\mathcal{G}_{12}}} - 2 \rmi k  F_2 {F_2}' \widetilde{\mathcal{L}_{\mathcal{G}_{22}}} - \rmi k F_2^2 \widetilde{\mathcal{L}_{\mathcal{G}_{22}}}' \right),
\label{eqn:A1}
\end{eqnarray}
%
\begin{eqnarray}
 \fl  \widetilde{\psi} (\rmi k)^2 + \widetilde{\psi}'' = -\widetilde{\Omega},
  \label{eqn:A2}
\end{eqnarray}

\begin{eqnarray}
\fl & (y-y^2) (\rmi k) \widetilde{\mathcal{L}_{\mathcal{G}_{11}}} = \frac{2}{\sqrt{d}} \left(F_1 F_4 {\widetilde{\psi}}' (\rmi k) - F_1 F_2 {\widetilde{\psi}}'' - F_2 F_4 {\widetilde{\psi}} (\rmi k)^2 + F_2^2 {\widetilde{\psi}}' (\rmi k) -
\right. \nonumber\\
\fl & \left.
\widetilde{\mathcal{L}_{\mathcal{G}_{11}}} F_1 F_2 (1-2y) - F_2^2 \widetilde{\mathcal{L}_{\mathcal{G}_{12}}} (1-2y) - {\widetilde{\psi}}(\rmi k) \left(-F_4 F_1' + F_2 F_2'\right) \right)- \frac{\widetilde{\mathcal{L}_{\mathcal{G}_{11}}}}{We},
    \label{eqn:A3}
\end{eqnarray}

\begin{eqnarray}
\fl & (y\!-\!y^2) (\rmi k) \widetilde{\mathcal{L}_{\mathcal{G}_{12}}} \!=\! \frac{1}{\sqrt{d}} \left(- 2 F_1 F_2 {\widetilde{\psi}}' (\rmi k) \!+\! F_1^2 {\widetilde{\psi}}'' \!+\! F_2^2 {\widetilde{\psi}} (\rmi k)^2  \!+\! \widetilde{\mathcal{L}_{\mathcal{G}_{11}}} F_1^2 (1\!-\!2y) \!+\!
\right. \nonumber\\
\fl & \left.
2 F_2 F_4 {\widetilde{\psi}}'(\rmi k) - F_2^2 {\widetilde{\psi}}'' - F_4^2 {\widetilde{\psi}} (\rmi k)^2 - \widetilde{\mathcal{L}_{\mathcal{G}_{22}}} F_2^2 (1-2y) - {\widetilde{\psi}} (\rmi k ) \left(F_2 F_1' - F_1 F_2' - F_4
\right.\right. \nonumber\\
\fl & \left. \left.
F_2' + F_2 F_4'\right)\right) - \frac{\widetilde{\mathcal{L}_{\mathcal{G}_{12}}}}{We},
\label{eqn:A4}
\end{eqnarray}

\begin{eqnarray}
\fl & (y\!-\!y^2)(\rmi k) \widetilde{\mathcal{L}_{\mathcal{G}_{22}}} \!=\! \frac{2}{\sqrt{d}} \left( \widetilde{\mathcal{L}_{\mathcal{G}_{12}}} F_1^2 (1\!-\!2y) \!-\! F_2^2 {\widetilde{\psi}}' (\rmi k)  \!+\! F_1 F_2 {\widetilde{\psi}}'' \!+\! F_2 F_4 {\widetilde{\psi}} (\rmi k)^2 - 
\right. \nonumber\\
\fl & \left.
F_1 F_4 {\widetilde{\psi}}' (\rmi k)   + \widetilde{\mathcal{L}_{\mathcal{G}_{22}}} F_1 F_2 (1-2y) - {\widetilde{\psi}} (\rmi k) \left(F_2 F_2' - F_1 F_4'\right)\right) - \frac{\widetilde{\mathcal{L}_{\mathcal{G}_{22}}}}{We},
 \label{eqn:A5}
\end{eqnarray}

%
%
%
%
%
%
%
where we denote $\frac{d}{d y}(\,) = (\,\,)'$ and
\begin{eqnarray}
   \overline{\mathbf{F}} =  
   \left[ \matrix{ F_1 & F_2 \cr F_2 & F_4 \cr}\right]
 =
\left[\matrix{
   \frac{1+\sqrt{d}}{\sqrt{2d + 2 \sqrt{d}}} &
   \frac{We (1-2y)}{\sqrt{2d + 2 \sqrt{d}}} \cr
   \frac{We (1-2y)}{\sqrt{2d + 2 \sqrt{d}}} &
   \frac{2d + \sqrt{d}-1}{\sqrt{2d + 2 \sqrt{d}}} \cr}\right]
\end{eqnarray}
where
\[
d = 1 + {We}^2{(1-2y)}^2.
\]

The solution to the boundary value problem is found subject to the boundary conditions, $(\widetilde{\psi}(y),\widetilde{\psi'}(y)) = (0, 0)$ at the rigid walls $y = 0, 1$.

\section*{References}
\bibliographystyle{bibstyle_iop}
\bibliography{references.bib} 

\end{document}